\documentclass[11pt]{article}

\usepackage[utf8]{inputenc}
\usepackage[T1]{fontenc}
\usepackage[margin=1in]{geometry}
\usepackage{amsmath,amssymb,amsthm}
\usepackage{booktabs}
\usepackage{hyperref}
\usepackage{tikz}

\newtheorem{theorem}{Theorem}
\newtheorem{lemma}[theorem]{Lemma}
\newtheorem{corollary}[theorem]{Corollary}
\theoremstyle{definition}
\newtheorem{definition}[theorem]{Definition}

\newcommand{\Z}{\mathbb{Z}}
\newcommand{\N}{\mathbb{N}}
\newcommand{\Spir}{\Sigma}
\newcommand{\pieces}{\operatorname{pieces}}
\newcommand{\heavyP}{\operatorname{heavyP}}
\newcommand{\costly}{\operatorname{costly}}
\newcommand{\len}{\operatorname{len}}
\newcommand{\Band}{\operatorname{Band}}
\newcommand{\indeg}{\operatorname{indeg}}
\newcommand{\outdeg}{\operatorname{outdeg}}
\newcommand{\lean}[1]{\texttt{\detokenize{#1}}}

\title{Linear-Time FPT Rectilinear Planarity Testing\\ of Series-Parallel Graphs}
\author{Giacomo Ortali\\ Universit\`a Telematica Internazionale Uninettuno\\ \texttt{giacomo.ortali@uninettunouniversity.net}}
\date{}

\begin{document}
\maketitle

\begin{abstract}
A planar graph is rectilinear planar if it admits a planar orthogonal drawing without bends. For series-parallel graphs in the variable embedding setting the best known testing algorithm takes $O(n^2)$ time. We show that rectilinear planarity of a series-parallel graph with maximum degree four can be tested in $2^{O(k)}\cdot n$ time, where $k$ is the number of degree-4 vertices; the planar embedding is not fixed. In particular, the running time is subquadratic when $k=o(\log n)$. Each node of the SPQ$^*$-tree stores the set of its admissible spiralities; we show that this set is a union of $2^{O(k)}$ arithmetic progressions of step two. All proofs are in the paper and are additionally verified in Lean~4 with Mathlib.
\end{abstract}

\section{Introduction}\label{sec:intro}

A graph is \emph{rectilinear planar} if it admits a planar orthogonal drawing without bends; only graphs of maximum degree four (\emph{4-graphs}) can be rectilinear planar. Rectilinear planarity testing is NP-hard~\cite{GarTam01}. For partial 2-trees in the variable embedding setting it can be solved in $O(n^2)$ time, and in $O(n)$ time when no two parallel components of a biconnected component share a pole~\cite{DKLO22}. Whether $O(n)$ time is achievable for all series-parallel graphs is open.
\begin{theorem}\label{thm:main}
Let $G$ be a series-parallel 4-graph with $n$ vertices, $k$ having degree four. Whether $G$ is rectilinear planar, over all planar embeddings, can be tested in $2^{O(k)}\cdot n$ time.
\end{theorem}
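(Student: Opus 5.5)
We follow the spirality-based dynamic programming of~\cite{DKLO22} over the SPQ$^*$-tree of $G$, but change how the admissible-spirality sets are stored and combined. First we reduce to the biconnected case. The blocks of a series-parallel graph are handled through its block-cut tree in linear time, as in~\cite{DKLO22}, with a constant number of angle configurations at each cut vertex. We root the SPQ$^*$-tree $T$ of a block at a Q$^*$-node (a chain of degree-2 vertices). For every node $\mu$ with pertinent graph $G_\mu$ and for each of the $O(1)$ possible ``pole types'' (the degrees of the poles inside $G_\mu$ and the alias-vertex choices), we define $\Spir(\mu)\subseteq\Z$ as the set of spiralities $\sigma$ for which $G_\mu$ admits a rectilinear representation of spirality $\sigma$ over all embeddings of $G_\mu$. $G$ is rectilinear planar iff, for some root choice, the root's set together with the reference chain closes to total spirality $4$. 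Trying all roots is too slow, so we use a single root and handle the reference chain as a Q$^*$-node whose admissible spiralities form an interval.

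The structural core is a lemma: $\Spir(\mu)$ is a union of at most $2^{O(k_\mu)}$ arithmetic progressions of step two (possibly unbounded intervals of a fixed parity), where $k_\mu$ counts the degree-4 vertices in $G_\mu$, including poles. We prove it by induction on $T$. For a Q$^*$-node of length $\ell$, the set is $\{-(\ell-1),\dots,\ell-1\}$ with parity constraints, i.e.\ $O(1)$ progressions. For an S-node, the spirality of a series composition is the sum of the children's spiralities plus a term from the $O(1)$ turn choices at the junction vertices. The Minkowski sum of two step-two progressions of the same parity class is again one progression, so the count multiplies: $p(\mu)\le c\cdot p(\mu_1)p(\mu_2)$. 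For a P-node with two or three children, the relations of~\cite{DKLO22} say the children's spiralities must equal $\sigma$ shifted by constants depending on the pole angles, which have $O(1)$ choices. The result is therefore a union over $O(1)$ cases of intersections of shifted sets, and an intersection of unions of progressions has at most $p(\mu_1)p(\mu_2)$ pieces. The key point is that the multiplicative blow-up must only be charged at degree-4 vertices. At a degree-$\le 3$ pole, the sets involved are single intervals (convexity, as in the $O(n)$ case of~\cite{DKLO22}), so combining intervals of the same parity yields an interval. A new ``gap'' can only arise when a pole has degree four and its angle choices give genuinely disjoint shifts. Hence $\log p(\mu)=O(k_\mu)$.

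The algorithm then computes $\Spir(\mu)$ bottom-up as a sorted list of $2^{O(k)}$ progressions, after merging overlapping ones. The main obstacle is the running time. Naively, each S- or P-node costs $p(\mu_1)p(\mu_2)$, which is already $2^{O(k)}$, and summing over $O(n)$ nodes gives $2^{O(k)}n$ only if each combination takes time polynomial in the sizes. For P-nodes, intersecting sorted lists takes linear time in their sizes. For S-nodes, the Minkowski sum of a multi-piece set with an interval child (the common case) takes linear time. A sum of two multi-piece sets occurs only when both children contain degree-4 vertices, and there are at most $k$ such ``branching'' S-nodes in total, each costing $2^{O(k)}$. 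Long chains of S-nodes without degree-4 vertices are compressed: we sum their intervals directly in $O(\text{length})$ time. The set sizes also need care: they must stay at $2^{O(k)}$ rather than $2^{O(k\log n)}$. We ensure this by charging each factor-$c$ blow-up to a distinct degree-4 vertex and proving that nodes without degree-4 vertices have interval sets. Altogether the DP runs in $2^{O(k)}\cdot n$ time, which proves Theorem~\ref{thm:main}.
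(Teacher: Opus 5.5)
Your proposal has two genuine gaps.

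\textbf{First gap: the piece bound.} This is the heart of the theorem. You argue that gaps appear only at degree-4 poles because ``at a degree-$\le3$ pole the sets involved are single intervals''. But the pole degrees only fix the coefficients of the P2-rule, not the shape of the children's sets. A light P2-node (no degree-4 pole) can have a child $\mu_l$ whose subtree contains heavy P-nodes. Then $\Spir(\mu_l)=2A$ has many pieces, even if $\Spir(\mu_r)=2B_c$ is an interval. The rule is a union over the two orders of the children, $\sigma_P(A,B_c)=R(A,B_c)\cup R(B_c,A)$, and each term can have $\pieces(A)$ pieces. The straightforward bound is therefore $2\pieces(A)$. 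Such nodes can be nested $\Theta(n)$ times along a root-to-leaf path, so charging only degree-4 vertices gives nothing better than the trivial $O(n)$ pieces per set behind the $O(n^2)$ algorithm of \cite{DKLO22}.

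The missing idea is the Key Lemma (Lemma~\ref{lem:key}). Because the embedding is variable, $\Spir(\mu_l)$ is symmetric, so $R(B_c,A)=-R(A,B_c)$. Peel off the two outermost mirror-image intervals $[l,m]$ and $[-m,-l]$ of $A$. Their joint contribution $H=R([l,m],B_c)\cup-R([-m,-l],B_c)$ is a single interval, since both terms contain $l$ when the second is nonempty. Induction then gives $\pieces(\sigma_P(A,B_c))\le\pieces(A)$.

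You also still need to count the nodes where two multi-piece sets meet: S-nodes and light P2-nodes with two non-light children. The paper bounds these costly nodes by $2\heavyP(\mu)-1$ and uses $\heavyP(T)\le 2k$, since a degree-4 vertex is a pole of at most two P-nodes. Finally, your S-recurrence $p(\mu)\le c\,p(\mu_1)p(\mu_2)$, with a constant $c>1$ from junction turns, would itself compound over $\Theta(n)$ S-nodes. In Definition~\ref{def:sigma} the S-rule is the plain Minkowski sum, so summing with a one-piece set never increases the count.

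\textbf{Second gap: the external face.} Rooting $T$ at a single Q$^*$-node $\rho$ only detects representations with $\rho$ on the external face (Lemma~\ref{lem:root}), and the sets $\Spir(\mu)$ themselves depend on the root. Modelling the reference chain as a Q$^*$-node is just that test for one root and does not replace the others. The paper needs every chain as a root and gets all of them within the same budget via the reusability principle (Lemma~\ref{lem:reroot}):
\begin{itemize}
\item a P-node has at most four neighbours in $T$, so its set is recomputed with $O(1)$ operations;
\item an S-node stores prefix and suffix sums of its children's sets, so each re-rooting costs $O(1)$ sums;
\item every list involved has at most $36^{4k}$ pieces in every rooting, because Lemma~\ref{lem:heavy} holds for every root.
\end{itemize}
Without this you only get $2^{O(k)}\cdot n^2$. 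Separately, the block-cut tree step is unnecessary here, since series-parallel graphs are biconnected. The paper lists the extension to partial 2-trees as open, so that step cannot simply be taken from \cite{DKLO22}.
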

In particular, if $k=o(\log n)$ the running time is $n^{1+o(1)}$, which is subquadratic and improves on the $O(n^2)$ bound of~\cite{DKLO22}.

We follow the spirality approach of~\cite{DKLO22}: each node $\mu$ of the SPQ$^*$-tree gets the set $\Spir(\mu)$ of spiralities its component can take. The $O(n^2)$ bound of~\cite{DKLO22} comes from these sets not being intervals. Without degree-4 vertices the spirality sets are intervals; degree-4 vertices may break them into several maximal arithmetic progressions of step two (\emph{pieces}). Our main technical result, the Key Lemma (Lemma~\ref{lem:key}), shows that a P2-node without degree-4 poles does not increase the number of pieces when the spirality set of one of its children has a single piece; as a consequence, every spirality set has at most $36^{4k}$ pieces (Lemma~\ref{lem:global}). Storing each set as the list of its pieces, every node is then processed in $2^{O(k)}$ time.
All proofs are given in the paper. The Key Lemma, the combinatorial bounds, the correctness of the algorithm and its cost for a fixed root are additionally verified in Lean~4 with Mathlib (Section~\ref{sec:lean}).

\paragraph{Outline.} Section~\ref{sec:prelim} recalls spiralities and the rules of~\cite{DKLO22}. Section~\ref{sec:pieces} bounds the number of pieces of every spirality set. Section~\ref{sec:algo} gives the algorithm, first for a fixed root and then for all roots. Section~\ref{sec:lean} describes the formalization.

\section{Preliminaries}\label{sec:prelim}

\paragraph{Series-parallel graphs and SPQ$^*$-trees.} For basic notions on series-parallel graphs, orthogonal representations and SPQ$^*$-trees we refer to~\cite{DETT99,DKLO22}. Series-parallel graphs are biconnected; graphs whose blocks are series-parallel are partial 2-trees. Throughout, $G$ is a series-parallel 4-graph with $n$ vertices, not a simple cycle (a cycle is rectilinear planar iff it has at least four vertices), and $T$ is its SPQ$^*$-tree, rooted at a Q$^*$-node. A Q$^*$-node $\mu$ is a chain with $\len(\mu)$ edges; a P-node has two or three children (P2- and P3-nodes); each node $\mu$ has two poles $u,v$ and a pertinent graph $G_\mu$. $T$ has $O(n)$ nodes and is computed in $O(n)$ time~\cite{GM00}. See Figure~\ref{fig:spqr} for an example.

\begin{figure}[tbp]
\centering
\begin{tikzpicture}[
  x=.68cm,y=.68cm,
  edge/.style={draw=black,line width=.7pt},
  light edge/.style={draw=blue!70!black,line width=1pt},
  vertex/.style={circle,draw=black,fill=white,inner sep=1.5pt},
  light vertex/.style={circle,draw=blue!70!black,fill=blue!70!black,inner sep=1.8pt},
  pole/.style={circle,draw=red!75!black,fill=red!75!black,inner sep=2.5pt},
  tree edge/.style={draw=black,line width=.65pt},
  tree node/.style={draw,rounded corners,fill=white,minimum width=9mm,minimum height=6mm,inner sep=.8pt,font=\footnotesize},
  heavy p/.style={draw=red!75!black,fill=red!15},
  light p/.style={draw=blue!70!black,fill=blue!12}
]
  \node[font=\normalsize\bfseries] at (5,4.8) {(a) Graph};
  \coordinate (u) at (0,0);
  \coordinate (v) at (10,0);
  \coordinate (r1) at (1.5,3.2);
  \coordinate (r2) at (8.5,3.2);
  \coordinate (c) at (2.1,1.45);
  \coordinate (d) at (2.1,-2.35);
  \coordinate (a) at (2.1,-.25);
  \coordinate (b) at (7.8,-.25);
  \coordinate (x) at (4.95,.35);
  \coordinate (y) at (4.95,-.9);
  \draw[edge] (u)--(r1)--(r2)--(v);
  \draw[edge] (u)--(c)--(v);
  \draw[edge] (u)--(d)--(v);
  \draw[edge] (u)--(a);
  \draw[light edge] (a)--(x)--(b);
  \draw[light edge] (a)--(y)--(b);
  \draw[edge] (b)--(v);
  \foreach \z in {r1,r2,c,d,x,y} \node[vertex] at (\z) {};
  \node[pole] at (u) {};
  \node[pole] at (v) {};
  \node[light vertex] at (a) {};
  \node[light vertex] at (b) {};
  \node[above left,font=\normalsize] at (u) {$u$};
  \node[above right,font=\normalsize] at (v) {$v$};
  \node[below left,font=\normalsize] at (a) {$a$};
  \node[above=0.01cm,font=\normalsize] at (b) {$b$};

  \begin{scope}[xshift=1.25cm]
  \node[font=\normalsize\bfseries] at (16.7,5.75) {(b) SPQ$^*$-tree};
  \node[tree node] (qroot) at (16.7,4.25) {Q$^*$};
  \node[tree node,heavy p] (p3) at (16.7,2.7) {P3};
  \node[tree node] (snode) at (13,1.15) {S};
  \node[tree node] (qb) at (16.7,1.15) {Q$^*$};
  \node[tree node] (qd) at (20.4,1.15) {Q$^*$};
  \node[tree node] (qua) at (10.5,-.4) {Q$^*$};
  \node[tree node,light p] (p2) at (13, -.4) {P2};
  \node[tree node] (qbv) at (15.5,-.4) {Q$^*$};
  \node[tree node] (qax) at (11.75,-1.95) {Q$^*$};
  \node[tree node] (qay) at (14.25,-1.95) {Q$^*$};
  \draw[tree edge] (qroot.south)--(p3.north);
  \draw[tree edge] (p3.south)--(snode.north);
  \draw[tree edge] (p3.south)--(qb.north);
  \draw[tree edge] (p3.south)--(qd.north);
  \draw[tree edge] (snode.south)--(qua.north);
  \draw[tree edge] (snode.south)--(p2.north);
  \draw[tree edge] (snode.south)--(qbv.north);
  \draw[tree edge] (p2.south)--(qax.north);
  \draw[tree edge] (p2.south)--(qay.north);
  \end{scope}
\end{tikzpicture}
\caption{(a) A series-parallel 4-graph; $u$ and $v$ have degree four. (b) Its SPQ$^*$-tree, rooted at the top chain: the P3-node with poles $u,v$ is heavy (red), the P2-node with poles $a,b$ is light (blue).}
\label{fig:spqr}
\end{figure}

\paragraph{Spirality.} Let $H$ be a rectilinear representation of $G$ and $H_\mu$ its restriction to $G_\mu$. Intuitively, the spirality of $H_\mu$ measures how much $G_\mu$ is rolled up between its poles: it counts the right minus left turns along a path from one pole to the other, in units of $90^\circ$. Formally~\cite{DiBattistaLV98,DKLO22}, for a pole $w$ let $\indeg_\mu(w)$ and $\outdeg_\mu(w)$ be the numbers of edges at $w$ inside and outside $G_\mu$. The \emph{alias vertices} of $w$ are: $w$ itself if $\indeg_\mu(w)=1$; two dummy vertices subdividing the two edges at $w$ outside $G_\mu$ if $\indeg_\mu(w)=\outdeg_\mu(w)=2$; one dummy vertex subdividing the edge at $w$ outside $G_\mu$ otherwise. A \emph{spine} is a path from an alias vertex $u'$ of $u$ to an alias vertex $v'$ of $v$ formed by $(u',u)$, a simple $u$-$v$ path in $G_\mu$, and $(v,v')$; let $n(S)$ be the number of right minus left turns along a spine $S$. The \emph{spirality} $\sigma_H(\mu)$ is $n(S)$ if each pole has one alias vertex, and otherwise the average of $n(S)$ over the two spines through the two alias vertices of a pole (if both poles have two alias vertices, over the two spines on the same side). It does not depend on the chosen path. A pole with two alias vertices contributes the average of two turns, so $\sigma_H(\mu)$ is half-integral when exactly one pole has two alias vertices, and integral otherwise. Figure~\ref{fig:spirality} shows one example of each case. 

\begin{figure}[tbp]
\centering
\begin{tikzpicture}[
  x=.68cm,y=.68cm,
  spine one/.style={draw=black,line width=1.6pt},
  spine two/.style={draw=black,dash pattern=on 3pt off 1.5pt,line width=1.35pt},
  alias edge/.style={draw=black,dash pattern=on 1pt off 1.2pt,line width=1.1pt},
  alias mask/.style={draw=white,line width=2.5pt},
  vertex/.style={circle,draw=black,fill=black,inner sep=1.5pt},
  pole/.style={circle,draw=black,fill=black,inner sep=2.2pt},
  degree four/.style={circle,draw=red!75!black,fill=red!75!black,inner sep=2.5pt},
  alias/.style={rectangle,draw=black,fill=white,minimum size=4.2pt,inner sep=0pt}
]
  \node[font=\normalsize\bfseries] at (1.5,3.35) {(a) One alias per pole};
  \coordinate (ua) at (-1,0);
  \coordinate (u) at (0,0);
  \coordinate (p) at (3,0);
  \coordinate (v) at (3,-2);
  \coordinate (q) at (0,2);
  \coordinate (r) at (4,2);
  \coordinate (t) at (4,-2);
  \coordinate (va) at (3,-3);
  \draw[spine two] (u)--(q)--(r)--(t)--(v);
  \draw[spine one] (ua)--(u)--(p)--(v)--(va);
  \draw[alias mask] (ua)--(u);
  \draw[alias mask] (v)--(va);
  \draw[alias edge] (ua)--(u);
  \draw[alias edge] (v)--(va);
  \foreach \z in {p,q,r,t} \node[vertex] at (\z) {};
  \node[pole] at (u) {};
  \node[pole] at (v) {};
  \node[alias] at (ua) {};
  \node[alias] at (va) {};
  \node[above left,font=\normalsize] at (u) {$u$};
  \node[below right,font=\normalsize] at (v) {$v$};
  \node[left,font=\normalsize] at (ua) {$u'$};
  \node[right,font=\normalsize] at (va) {$v'$};
  \node[font=\normalsize] at (1.5,-3.8) {$n(S)=1,\quad \sigma_H(\mu)=1$};

  \node[font=\normalsize\bfseries] at (13.5,3.35) {(b) Two aliases at one pole};
  \coordinate (u2) at (10,0);
  \coordinate (v2) at (16,-2);
  \coordinate (p2) at (16,0);
  \coordinate (q2) at (10,2);
  \coordinate (r2) at (18,2);
  \coordinate (t2) at (18,-2);
  \coordinate (u2a) at (9,0);
  \coordinate (u2b) at (10,-1);
  \coordinate (v2a) at (16,-3);
  \draw[spine one] (u2)--(p2)--(v2);
  \draw[spine two] (u2)--(q2)--(r2)--(t2)--(v2);
  \draw[alias mask] (u2a)--(u2);
  \draw[alias mask] (u2b)--(u2);
  \draw[alias mask] (v2)--(v2a);
  \draw[alias edge] (u2a)--(u2);
  \draw[alias edge] (u2b)--(u2);
  \draw[alias edge] (v2)--(v2a);
  \foreach \z in {p2,q2,r2,t2} \node[vertex] at (\z) {};
  \node[degree four] at (u2) {};
  \node[pole] at (v2) {};
  \node[alias] at (u2a) {};
  \node[alias] at (u2b) {};
  \node[alias] at (v2a) {};
  \node[above left,font=\normalsize,text=red!75!black] at (u2) {$u$};
  \node[below left,font=\normalsize] at (v2) {$v$};
  \node[left,font=\normalsize] at (u2a) {$u'_1$};
  \node[right,font=\normalsize] at (u2b) {$u'_2$};
  \node[right,font=\normalsize] at (v2a) {$v'$};
  \node[font=\normalsize] at (13.5,-3.8) {$n(S_1)=1,\quad n(S_2)=2,\quad \sigma_H(\mu)=\frac{1+2}{2}=\frac32$};
\end{tikzpicture}
\caption{A component $G_\mu$ with poles $u,v$ (solid and dashed edges); dotted edges lie outside $G_\mu$ and end at the alias vertices (squares); spines are drawn thick. (a) Each pole has two edges inside and one outside $G_\mu$, hence one alias vertex: $\sigma_H(\mu)=n(S)=1$. (b) $u$ has two edges inside and two outside $G_\mu$, hence two alias vertices: $\sigma_H(\mu)=(n(S_1)+n(S_2))/2=3/2$.}
\label{fig:spirality}
\end{figure}

\paragraph{Spirality sets.} We use \emph{doubled} spiralities, which are integers, and let $\Spir(\mu)\subseteq\Z$ be the set of doubled spiralities $2\sigma_H(\mu)$ over all rectilinear representations $H$ of $G$. Mirroring a representation exchanges right and left turns; since the embedding is not fixed, every $\Spir(\mu)$ is \emph{symmetric}, i.e., $\Spir(\mu)=-\Spir(\mu)$. Clearly, $\Spir(\mu)$ is finite. For $A,B\subseteq\Z$ let $A\oplus B=\{a+b : a\in A, b\in B\}$, $A+d=\{a+d : a\in A\}$, and for $c\in\N$ let $\Band(c)=\{2s : s\in\Z,\ |s|\le c\}$. For example, $\Band(2)=\{-4,-2,0,2,4\}$. The following is~\cite[Lemmas~1--3]{DKLO22} in the doubled scale.

\begin{definition}\label{def:sigma}
\begin{description}
\item[Q$^*$-node:] $\Spir(\mu)=\Band(\len(\mu)-1)$.
\item[S-node] with children $\mu_1,\dots,\mu_h$: $\Spir(\mu)=\Spir(\mu_1)\oplus\dots\oplus\Spir(\mu_h)$.
\item[P3-node:] $z\in\Spir(\mu)$ iff for some ordering $(a,b,c)$ of the children there are $x\in\Spir(a)$, $y\in\Spir(b)$, $t\in\Spir(c)$ with $z=x-4=y=t+4$.
\item[P2-node] with children $\mu_l,\mu_r$: $z\in\Spir(\mu)$ iff for one of the two orders of the children there are $x\in\Spir(\mu_l)$, $y\in\Spir(\mu_r)$ and $\alpha^l_u,\alpha^r_u,\alpha^l_v,\alpha^r_v\in\{0,1\}$ with $1\le\alpha^l_w+\alpha^r_w\le2$ for $w\in\{u,v\}$ and
\[
z = x - 2k^l_u\alpha^l_u - 2k^l_v\alpha^l_v = y + 2k^r_u\alpha^r_u + 2k^r_v\alpha^r_v,
\]
where $\alpha^l_w=\alpha^r_w=1$ whenever $k^l_w=\tfrac12$ or $k^r_w=\tfrac12$.
\end{description}
\end{definition}

Here $k^d_w=1$ if $\indeg_{\mu_d}(w)=\outdeg_\mu(w)=1$ and $k^d_w=\tfrac12$ otherwise (text before~\cite[Lemma~3]{DKLO22}). A pole of degree three has $\indeg_\mu(w)=2$, $\outdeg_\mu(w)=1$, hence coefficients $1$; a pole of degree four has a coefficient $\tfrac12$, and there $\alpha^l_w=\alpha^r_w=1$ (paragraph before~\cite[Lemma~3]{DKLO22plane}). So all coefficients of a P2-node equal $1$ iff it has no pole of degree four. Both poles of a P3-node have degree four.

\paragraph{Test at the root.} Let the root $\rho$ be a chain with $\ell$ edges and let $\nu$ be its child.

\begin{lemma}\label{lem:root}
$G$ has a rectilinear representation with $\rho$ on the external face iff $8\in\Spir(\nu)\oplus\Band(\ell-1)$.
\end{lemma}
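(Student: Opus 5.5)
The plan is to reduce the statement to the root-level combination rule of~\cite{DKLO22} and rewrite it in the doubled scale. Since $G$ is not a simple cycle, the root Q$^*$-node $\rho$ is a chain with poles $u,v$, and $G$ is the union of the chain $\rho$ and the pertinent graph $G_\nu$, which share exactly $u$ and $v$. Thus $G$ is obtained by a parallel composition of $\rho$ with $\nu$ closing a cycle. A representation of $G$ with $\rho$ on the external face is then a rectilinear representation of $G_\nu$ together with the chain, glued so that the external face is bounded by $\rho$ and one side of $G_\nu$.

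The key step uses the turn-counting characterization of the external face of an orthogonal representation. Going around the external face, right minus left turns (with angles at vertices counted as turns) equals $-4$ in the usual convention; equivalently, the closed cycle formed by a spine of $\nu$ together with the chain $\rho$ has total rotation $4$ (i.e.\ $360^\circ$). I would fix a spine $S$ of $G_\nu$ whose alias vertices lie on $\rho$; this is the convention that makes the spine edges $(u',u)$ and $(v,v')$ the end edges of $\rho$. Then the full turn count of this cycle splits as $n(S)$ plus the turns at the internal vertices of $\rho$, and this total must equal $4$, that is, $8$ in the doubled scale. The two end vertices of the chain where it meets the spine are already accounted for in $n(S)$ through the alias construction. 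When a pole has two alias vertices, one averages over the two spines, exactly as in the definition of $\sigma_H(\nu)$; this gives $2\sigma_H(\nu)$ doubled.

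The chain contributes its $\ell-1$ internal vertices. Each is a degree-2 vertex turning $+1$, $0$ or $-1$, so the doubled sum ranges over exactly $\Band(\ell-1)$; each value in $\Band(\ell-1)$ is realized by choosing the turns independently. This gives:
\begin{itemize}
\item[$(\Rightarrow)$] Any such representation $H$ yields $8=2\sigma_H(\nu)+t$ with $2\sigma_H(\nu)\in\Spir(\nu)$ and $t\in\Band(\ell-1)$.
\item[$(\Leftarrow)$] Given $x\in\Spir(\nu)$ and $t\in\Band(\ell-1)$ with $x+t=8$, take a representation of $G_\nu$ realizing $x$, assign chain turns summing to $t$, and glue.
\end{itemize}
For $(\Leftarrow)$ I would invoke~\cite[Lemmas~1--3]{DKLO22} / the classical result of~\cite{DiBattistaLV98}: an orthogonal representation is valid iff every face satisfies the rotation condition. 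Internal faces of $G_\nu$ are unchanged, and the new external face meets its condition by the computation above.

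The main obstacle is that $\Spir(\nu)$ is defined via representations of all of $G$, not of $G_\nu$ alone. In the $(\Leftarrow)$ direction I must argue that the restriction of any witnessing representation to $G_\nu$ has spirality $x$ independently of how the rest was drawn. Spirality depends only on the representation of $G_\nu$ and the angles at the poles, so the argument goes through. I must also check the possibly mirrored embedding (the chain may lie on either side of $G_\nu$); symmetry of $\Spir(\nu)$ handles the sign, so the condition $8$ rather than $\pm 8$ suffices. I would verify the case of a degree-4 pole (with its half-integral alias averaging) carefully, since there the chain's end edge is one of two outside edges.
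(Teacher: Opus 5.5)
Your argument is correct in substance, but it is more self-contained than the paper's. The paper does not derive the root condition. It imports $\sigma(H_\nu)-n(H^{uv}_\rho)=4$ from \cite[Lemma~6]{DDLMO23}, with the sign fixed by the conventions there. It then does exactly your band computation: turns at the $\ell-1$ internal vertices of $\rho$, doubling, and symmetry of $\Band$. You instead re-derive that condition from the rotation of the cycle formed by a spine of $\nu$ and $\rho$. To identify this cycle with the external face, take the spine along the boundary path of $G_\nu$ on the external side; path-independence then gives $n(S)=\sigma_H(\nu)$. You absorb the orientation into the symmetry of $\Spir(\nu)$, and for $(\Leftarrow)$ into the choice of the side of $G_\nu$ along which $\rho$ is routed. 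This shows where the $4$ comes from, but it obliges you to verify the gluing yourself, and that check is incomplete.

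Inserting $\rho$ into the external face of $H_\nu$ splits it into two faces:
\begin{itemize}
\item the new external face, bounded by $\rho$ and one boundary path of $G_\nu$;
\item a new \emph{internal} face, bounded by $\rho$ and the other boundary path.
\end{itemize}
You only check the former. The latter condition is automatic, for two reasons. First, all vertex conditions hold: the end edges of $\rho$ take the positions of the alias edges of $H_\nu$, and each internal vertex of $\rho$ has two angles summing to $360^\circ$. Second, by Euler's formula the sum of all face conditions is implied by the vertex conditions, so once every face but one satisfies its condition, so does the last.

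Two smaller points. The two-alias case you plan to check cannot occur at the root. There $\outdeg_\nu(u)=\outdeg_\nu(v)=1$, and $\indeg_\nu(w)=\deg(w)-1\ge2$ because $\rho$ is a maximal chain. So each pole has one alias vertex on the end edge of $\rho$, and $\sigma_H(\nu)$ is an integer. You should drop the averaging sentence; with two outside edges at a pole, the cycle through $\rho$ would see one spine, not the average. Also, your resolution of the ``main obstacle'' is not what $(\Leftarrow)$ needs. Each $x\in\Spir(\nu)$ must be witnessed by a representation of $G_\nu$ whose external face contains the alias edges at $u$ and $v$. That is the meaning of the spirality sets of \cite{DKLO22} computed by the rules of Definition~\ref{def:sigma}. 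A representation of $G$ with $\rho$ inside an internal face could not be reused to place $\rho$ outside.
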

\begin{proof}
By~\cite[Lemma~6]{DDLMO23} the condition is $\sigma(H_\nu)-n(H^{uv}_\rho)=4$, where $n(H^{uv}_\rho)$ is the number of right minus left turns along $\rho$. The $\ell-1$ internal vertices of $\rho$ turn by $-1$, $0$ or $1$ each, so $n(H^{uv}_\rho)$ ranges over $[-(\ell-1),\ell-1]$; double and use the symmetry of $\Band$.
\end{proof}

\paragraph{Pieces.} Throughout, intervals are intervals of integers: $[a,b]=\{x\in\Z : a\le x\le b\}$. A \emph{run} is a set $\{a,a+2,\dots,b\}$ with $b-a$ even. For finite $S\subseteq\Z$, a \emph{piece} of $S$ is a maximal run contained in $S$, and $\pieces_2(S)=|\{x\in S : x-2\notin S\}|$ is the number of pieces of $S$. For example, $S=\{-4,-2,0,4,6,9\}$ has the three pieces $\{-4,-2,0\}$, $\{4,6\}$ and $\{9\}$, so $\pieces_2(S)=3$. Similarly, $\pieces(S)=|\{x\in S : x-1\notin S\}|$ is the number of maximal intervals contained in $S$; for example, $\pieces(\{-2,-1,0,3,4\})=2$. In particular, $\pieces(S)=0$ iff $S=\emptyset$, and $\pieces(S)=1$ iff $S$ is a nonempty interval. If all elements of $S$ are even, then $\pieces_2(S)=\pieces(S/2)$.

\section{Bounding the Number of Pieces}\label{sec:pieces}

\paragraph{Light and heavy P-nodes.} A P2-node is \emph{light} if all its coefficients are $1$, i.e., it has no pole of degree four; all other P-nodes are \emph{heavy}. In Figure~\ref{fig:spqr}, the P3-node with poles $u,v$ is heavy and the P2-node with poles $a,b$ is light. A subtree is \emph{light} if it has no heavy P-node, and $\heavyP(\mu)$ is the number of heavy P-nodes in the subtree of $\mu$. Heavy P-nodes are the only places where the number of pieces can grow by more than a constant; the next lemma bounds how many there are.

\begin{lemma}\label{lem:heavy}
$\heavyP(T)\le 2k$, for every choice of the root.
\end{lemma}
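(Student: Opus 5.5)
We prove the bound by a charging argument: we assign to every heavy P-node of $T$ a degree-four vertex of $G$, and show that no degree-four vertex is charged more than twice. Since there are only $k$ degree-four vertices, this gives $\heavyP(T)\le 2k$.

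\emph{Each heavy P-node has a degree-four pole.} A heavy P-node is either a P3-node, whose two poles both have degree four, or a P2-node with a pole of degree four (by the discussion after Definition~\ref{def:sigma}). We charge the node to one of its degree-four poles $w$. It remains to show that a fixed vertex $w$ of degree four is a pole of at most two P-nodes of $T$, whatever the root.

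\emph{Each P-node with pole $w$ uses up edges at $w$.} Let $\mu$ be a P-node with pole $w$. Then $w$ has at least two edges in $G_\mu$, since each child of $\mu$ contributes at least one edge at $w$. Unless $\mu$ is adjacent to the root chain, $w$ also has at least one edge outside $G_\mu$.

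Now let $\mu_1,\mu_2,\dots$ be the P-nodes with pole $w$.
\begin{itemize}
\item Two such nodes with the same pair of poles cannot both exist: in an SPQ$^*$-tree, a P-node collects all components with those poles.
\item Two such nodes with different pole pairs $\{w,x\}$ and $\{w,y\}$ either have disjoint pertinent graphs, or one is nested in the other, passing through an S-node.
\end{itemize}
In the disjoint case, each node needs two edges at $w$, so three such nodes would need six edges, which is impossible for degree four. In the nested case, the inner P-node uses two edges at $w$ inside an S-node child of the outer P-node. That S-node contributes only one edge at $w$ to the outer node, since $w$ is an endpoint of the series chain. This contradicts the inner P-node having two edges at $w$ there. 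So nesting with a shared pole cannot occur: $w$ would be an internal vertex of the outer node's child, not its pole.

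Hence all P-nodes with pole $w$ have pairwise disjoint edge sets at $w$, each using at least two of the four edges, so there are at most two of them. Therefore each degree-four vertex is charged at most twice, and $\heavyP(T)\le 2k$.

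\emph{Independence of the root.} The argument uses only the pole sets of the P-nodes and the decomposition structure, which are independent of the root. The only exception is the root chain, which is a Q$^*$-node and so is never charged. The bound therefore holds for every choice of root.

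The main obstacle is making the nested case rigorous. What must be checked is that the edges at $w$ lying inside an S-node child of a P-node number exactly one. The underlying reason is that the S-node has $w$ as a pole and is a series composition, so its first child is the only one incident to $w$, and that child is not a P-node with pole $w$ by maximality.
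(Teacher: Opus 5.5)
Your overall strategy (charge each heavy P-node to a degree-four pole, then show that a degree-four vertex $w$ is a pole of at most two P-nodes) is the paper's, but your proof of the second step fails. The claim that two P-nodes with pole $w$ cannot be nested is false. The first child of an S-node with poles $w,x$ has poles $w,c_1$, and it may well be a P-node. Maximality only forbids a second P-node with the same pole pair $\{w,x\}$, not one with poles $\{w,c_1\}$.

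Concretely, take the paths $wac$ and $wbc$, the edge $cx$, the path $wdx$, and the root chain $wex$. Then $w$ has degree four. The P2-node with poles $w,x$ has two children: the chain $wdx$, and an S-node whose first child is the P2-node with poles $w,c$. That S-node contributes two edges at $w$, not one. Both P-nodes are heavy and share the pole $w$. So the edge sets at $w$ of the P-nodes with pole $w$ need not be pairwise disjoint, and your count (``disjoint, at least two edges each, hence at most two nodes'') does not cover all configurations.

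The missing ingredient is a way to count when nesting occurs; the paper uses two facts. First, if $\mu_1$ is a descendant of $\mu_2$ and both have pole $w$, then $\mu_2$ contains strictly more edges at $w$ than $\mu_1$: besides the child containing $\mu_1$, it has another child incident to $w$. Second, every P-node with pole $w$ misses at least one edge at $w$, because the root is a chain. This holds also for the child of the root, whose poles are the endpoints of the root chain, so your caveat ``unless $\mu$ is adjacent to the root chain'' is unnecessary. In your argument that outside edge was never used, but in the corrected one it is essential.

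With these two facts, three P-nodes with pole $w$ always force degree at least five:
\begin{itemize}
\item pairwise disjoint: $2+2+2$ edges;
\item one containing the other two (nested, or disjoint from each other): at least $4$ edges inside the outermost, plus one outside;
\item a nested pair plus a node disjoint from the outer one: $3+2$ edges.
\end{itemize}
Replacing your nested-case paragraph with this argument repairs the proof.
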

\begin{proof}
Every heavy P-node has a pole of degree four, and a vertex $w$ of degree four is a pole of at most two P-nodes. Indeed, a P-node with pole $w$ contains at least two edges of $w$, and at least one edge of $w$ lies outside it, since the root is a chain. Two P-nodes with pole $w$ either contain disjoint sets of edges of $w$, or one contains the other, and then it contains strictly more edges of $w$, since it has a further child incident to $w$. Hence three P-nodes with pole $w$ would give $w$ degree at least five.
\end{proof}

The Key Lemma (below) is a statement about the maximal intervals of sets of integers. To apply it to a light P2-node we halve the doubled spiralities of its children, which requires them to be even.

\begin{lemma}\label{lem:parity}
If $\nu$ is a light P2-node with children $\mu_l,\mu_r$, all elements of $\Spir(\mu_l)$ and $\Spir(\mu_r)$ are even.
\end{lemma}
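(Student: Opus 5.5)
The plan is to show that every spirality of a child $\mu_d$ ($d\in\{l,r\}$) of a light P2-node is an integer, so that its doubled value is even. By the definition of spirality, $\sigma_H(\mu_d)$ is half-integral exactly when exactly one pole of $\mu_d$ has two alias vertices, and integral otherwise. It therefore suffices to prove that neither pole of $\mu_d$ has two alias vertices, i.e., that for each pole $w\in\{u,v\}$ we do not have $\indeg_{\mu_d}(w)=\outdeg_{\mu_d}(w)=2$.

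The key step is a degree count at the poles. Since $\nu$ is light, neither pole $w$ of $\nu$ has degree four, so $\deg(w)\le 3$. The pertinent graph $G_\nu$ is the union of $G_{\mu_l}$ and $G_{\mu_r}$, and both contain at least one edge at $w$, so $\indeg_\nu(w)\ge 2$. Moreover, as in the proof of Lemma~\ref{lem:heavy}, at least one edge of $w$ lies outside $G_\nu$ because the root is a chain. Hence $\deg(w)=3$ and $\indeg_\nu(w)=2$, so each child has $\indeg_{\mu_d}(w)=1$. By the definition of alias vertices, $w$ is then its own unique alias vertex with respect to $\mu_d$. In particular, the case $\indeg_{\mu_d}(w)=\outdeg_{\mu_d}(w)=2$, the only one producing two alias vertices, never occurs.

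With both poles of $\mu_d$ having a single alias vertex, $\sigma_H(\mu_d)=n(S)$ is an integer for every rectilinear representation $H$. Consequently, every element $2\sigma_H(\mu_d)$ of $\Spir(\mu_d)$ is even, for both $d=l$ and $d=r$.

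The only delicate point is justifying that each child has exactly one edge at each pole. This needs both the lower bound $\indeg_\nu(w)\ge 2$, which follows because both children of a P-node are incident to both poles, and the existence of an outside edge, which follows from the rooting at a Q$^*$-chain. Alternatively, and more robustly, one can simply observe that two alias vertices at $w$ for $\mu_d$ would force $\deg(w)=\indeg_{\mu_d}(w)+\outdeg_{\mu_d}(w)=4$. This contradicts lightness of $\nu$ directly, since a pole of $\mu_d$ is also a pole of $\nu$, with no counting inside $\nu$ required. I would present this second argument as the main line and mention the first as the refinement $k^d_w=1$.
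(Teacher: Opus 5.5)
Your proposal is correct and follows essentially the paper's argument: both show that each pole of $\mu_d$ has a single alias vertex, so $\sigma_H(\mu_d)$ is an integer for every $H$ and its double is even. The paper reads $\indeg_{\mu_d}(w)=1$ directly off the lightness condition $k^d_w=1$. You instead reach the same conclusion by a degree count at the poles: either exact counting, or the shortcut that two alias vertices would force $\deg(w)=4$. This is an equivalent route.
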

\begin{proof}
Let $u,v$ be the poles of $\nu$. Since $\nu$ is light, all its coefficients $k^d_w$ equal $1$, that is, $\indeg_{\mu_d}(w)=\outdeg_\nu(w)=1$ for each child $\mu_d$ and each pole $w\in\{u,v\}$. Hence, in $G_{\mu_d}$, each pole $w$ has a single alias vertex, namely $w$ itself, and by the definition of spirality every spirality of $\mu_d$ is an integer.
\end{proof}

The Key Lemma also requires one of the two children to have an interval as spirality set. This holds whenever the subtree of that child is light; P3-nodes need not be considered, since they are always heavy.

\begin{lemma}\label{lem:light}
If the subtree of $\mu$ is light, then $\Spir(\mu)=\emptyset$ or $\Spir(\mu)=\Band(c)$ for some $c\in\N$.
\end{lemma}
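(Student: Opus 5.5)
We prove the statement by structural induction on the subtree of $\mu$, following the four rules of Definition~\ref{def:sigma}. Since the subtree is light, it contains no P3-node and every P2-node in it is light. The induction hypothesis says that each child has spirality set either empty or of the form $\Band(c)$. Throughout we use the facts that $\Band(a)\oplus\Band(b)=\Band(a+b)$ and that $A\oplus\emptyset=\emptyset$.

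\textbf{Q$^*$-node.} The claim is immediate: $\Spir(\mu)=\Band(\len(\mu)-1)$, and $\len(\mu)\ge1$.

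\textbf{S-node.} The Minkowski sum of the children's sets is either empty, if some child's set is empty, or $\Band(c_1+\dots+c_h)$. Both forms are allowed.

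\textbf{Light P2-node.} This is the only real work. All coefficients equal $1$, and by Lemma~\ref{lem:parity} both children's sets consist of even numbers. Let $\Spir(\mu_l)=\Band(a)$ and $\Spir(\mu_r)=\Band(b)$; if either set is empty, then $\Spir(\mu)$ is empty. Set $p=\alpha^l_u+\alpha^l_v$ and $q=\alpha^r_u+\alpha^r_v$. The constraint $1\le\alpha^l_w+\alpha^r_w\le2$ allows exactly the following pairs:
\[
(p,q)\in\{(0,2),(2,0),(1,1),(2,1),(1,2),(2,2)\}.
\]
The pair $(1,1)$ arises in two ways, and the pairs $(0,2)$ and $(2,0)$ in one way each. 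For a fixed pair the admissible $z$ form
\[
(\Band(a)-2p)\cap(\Band(b)+2q)=[-2a-2p,\,2a-2p]\cap[-2b+2q,\,2b+2q]\cap 2\Z,
\]
which is a run of even numbers. We then take the union over the allowed pairs and over the two orders of the children; swapping the children replaces $(a,b)$ by $(b,a)$.

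\textbf{The main obstacle} is showing that this union is again of the form $\Band(c)$, or empty. Symmetry is automatic, because $\Spir(\mu)=-\Spir(\mu)$ always holds. It remains to show that the union has no gaps.

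The first idea is to exploit the fact that the sum $p+q$ takes every value in $\{2,3,4\}$. Write the condition in terms of halved values: $z/2$ is feasible for the pair $(p,q)$ iff
\[
z/2+p\in[-a,a] \quad\text{and}\quad z/2-q\in[-b,b].
\]
The sets obtained from "adjacent" pairs, for instance $(1,1)$ and $(2,1)$, differ by a shift of one. So whenever both are nonempty they overlap or abut, and the union is an interval on the halved scale.

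To make this rigorous I would argue as follows. The union is symmetric and nonempty, so it suffices to show it contains $0$ and is downward closed in absolute value. Take the feasible $z$ of maximal $|z|$, say with $z>0$ and witnessed by a pair with large $q$ and small $p$. Moving from that pair towards $(1,1)$ through the chain $(0,2)\to(1,2)\to(1,1)$ lowers the attainable values by one step each. This should show that every even value between $0$ and $z$ is attained, with $0$ itself attained by $(1,1)$ provided $a,b\ge1$.

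The edge cases $a=0$ or $b=0$ need care and should be checked by hand. One of them can make $\Spir(\mu)$ empty, for example when both children are single edges; that outcome is allowed by the statement. If a clean chain argument fails, I would fall back to an explicit computation. Each pair contributes the even numbers in
\[
[\max(-2a-2p,\,-2b+2q),\ \min(2a-2p,\,2b+2q)],
\]
and one can verify directly that the union over the six pairs and both orders equals $\Band(c)$ with $c=\max(0,\min(a,b)+\ldots)$, or is empty, after a small case split on whether $a\ge b$. The result is again in the allowed form, which completes the induction.
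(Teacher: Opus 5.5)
Your induction has the same structure as the paper's, and the Q$^*$ and S cases are fine. The gap is the light P2 case, the only nontrivial step: it ends with ``this should show'', a fallback computation that is not carried out, and a value $c=\max(0,\min(a,b)+\ldots)$ that is never determined. The paper settles this case by quoting from \cite[Table~1]{DKLO22plane} that $\Spir(\mu)=\Band(c)$ with $c=\min\{\max(m,n),m+2,n+2\}$ if $m+n\ge2$, and $\Spir(\mu)=\emptyset$ otherwise.

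Your sketch also has concrete defects.
\begin{itemize}
\item You assume the union is nonempty without saying when. In halved values $t=z/2$, the pair $(p,q)$ contributes $[-a-p,a-p]\cap[q-b,q+b]$, which is nonempty iff $p+q\le a+b$. So $\Spir(\mu)=\emptyset$ exactly when $a+b\le1$.
\item Your chain $(0,2)\to(1,2)\to(1,1)$ can pass through an empty set, since $(1,2)$ needs $a+b\ge3$. For $a=2$, $b=0$ the union for this order is $\{0,1,2\}$, produced by $(0,2),(1,1),(2,0)$ alone.
\item The sets of $(1,1)$ and $(2,1)$ are not shifts of each other, since only one of the two constraining intervals moves.
\end{itemize}

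The exact unit shifts you want occur along the diagonal $p+q=2$. The sets of $(0,2),(1,1),(2,0)$ are $S_0$, $S_0-1$, $S_0-2$ with $S_0=[\max(-a,2-b),\min(a,b+2)]$. For $a+b\ge2$ their union is the interval $[\max(-a-2,-b),\min(a,b+2)]$, which contains $0$; this also absorbs your edge cases $a=0$ or $b=0$.

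No other pair adds anything. Indeed, $t+p\in[-a,a]$ and $t-q\in[-b,b]$ with $0\le p,q\le2$ already force $\max(-a-2,-b)\le t\le\min(a,b+2)$.

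Swapping the children gives the mirror interval. So the union is $[-c,c]$ with $c=\max\{\min(a,b+2),\min(b,a+2)\}=\min\{\max(a,b),a+2,b+2\}$, which is the paper's value. Completed this way, your argument is a self-contained derivation of what the paper imports from the table. It is essentially the characterization the paper later states as Lemma~\ref{lem:interval}.
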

\begin{proof}
By induction on the subtree of $\mu$, which contains no P3-node, since P3-nodes are heavy. For Q$^*$-nodes the claim is Definition~\ref{def:sigma}, and for S-nodes it follows from $\Band(c_1)\oplus\Band(c_2)=\Band(c_1+c_2)$. For a light P2-node whose children have spirality sets $\Band(m)$ and $\Band(n)$, \cite[Table~1]{DKLO22plane}, taking the union over the two orders of the children, gives $\Spir(\mu)=\Band(c)$ with $c=\min\{\max(m,n),m+2,n+2\}$ if $m+n\ge2$, and $\Spir(\mu)=\emptyset$ otherwise.
\end{proof}

\paragraph{The Key Lemma.} The Key Lemma is a statement about finite sets of integers. Let $B_c=[-c,c]$.
For $A,B\subseteq\Z$, let $R(A,B)$ be the set of values $z=x-s_l=y+s_r$ with $x\in A$, $y\in B$, $s_l=\alpha^l_u+\alpha^l_v$, $s_r=\alpha^r_u+\alpha^r_v$ and angle variables as in Definition~\ref{def:sigma}, and let $\sigma_P(A,B)=R(A,B)\cup R(B,A)$. Thus $2\,\sigma_P(A,B)$ is the spirality set of a light P2-node whose children have spirality sets $2A$ and $2B$, and $R(A,B)$ corresponds to one fixed order of the children. See Figure~\ref{fig:keylemma} for an example. The following lemma describes $R$ on intervals; it is the only place where the angle variables enter the proof of the Key Lemma.

\begin{lemma}\label{lem:interval}
Let $I$ be an interval. Then $z\in R(I,B_c)$ if and only if some $a\in I$ satisfies
\[
2-c\le a,\qquad a-2\le z\le a,\qquad z\le c+2.
\]
Consequently, $R(I,B_c)$ is an interval, and if $R(B_d,B_c)\neq\emptyset$, it contains $0$.
\end{lemma}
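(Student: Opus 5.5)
The plan is to unfold the definition of $R$ in the light case and reduce everything to elementary interval arithmetic. First I will determine which pairs $(s_l,s_r)$ the angle variables can produce. For a light P2-node all coefficients are $1$, so the variables attached to a pole $w$ are only constrained by $\alpha^l_w,\alpha^r_w\in\{0,1\}$ and $1\le\alpha^l_w+\alpha^r_w\le2$. Hence each pole contributes one of $(1,0)$, $(0,1)$, $(1,1)$, and summing over $u,v$ shows that the realizable pairs are exactly those with $0\le s_l,s_r\le2$ and $s_l+s_r\ge2$. This is the only point where the angle variables appear; from here on I work with this description.

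Next I fix $a=x\in I$ and $z$. The equation $z=a-s_l$ forces $s_l=a-z$, and $s_l\in[0,2]$ gives $a-2\le z\le a$. It remains to find $s_r$ with $\max(0,\,2-s_l)\le s_r\le 2$ and $y=z-s_r\in B_c=[-c,c]$, i.e.\ $s_r\in[z-c,z+c]$. Since all bounds are integers, a suitable integer $s_r$ exists iff every lower bound is at most every upper bound. Checking the pairs, $0\le 2$ and $2-a+z\le2$ are automatic, and $z-c\le z+c$ holds because $c\ge0$. The condition $0\le z+c$ follows from $z\ge a-2\ge -c$. The two nontrivial conditions are $2-a+z\le z+c$, i.e.\ $a\ge 2-c$, and $z-c\le2$, i.e.\ $z\le c+2$. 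This yields the stated characterization. I will treat $c\in\N$ throughout, as in the application; if $B_c$ were empty both sides are empty and the claim is trivial.

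For the consequences, I write $R(I,B_c)$ as the union, over admissible $a\in I'=I\cap[2-c,\,c+4]$, of the intervals $J_a=[a-2,\min(a,c+2)]$. Here $a\le c+4$ is exactly the condition for $J_a\ne\emptyset$. The set $I'$ is an interval, and for consecutive $a,a+1\in I'$ the sets $J_a$ and $J_{a+1}$ overlap. Indeed $a-1\in J_a$ whenever $a\le c+3$, and this always holds when $a+1\in I'$. So the union is an interval.

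For the last claim I take $I=B_d$ and assume $R\ne\emptyset$, so $I'=[\max(-d,2-c),\,\min(d,c+4)]$ is nonempty. Its lower end is at most $2$ and its upper end is at least $0$, so $I'$ meets $[0,2]$. Any $a\in I'\cap[0,2]$ has $0\in J_a$, because $a-2\le0\le\min(a,c+2)$. The main obstacle is the first step, namely getting the set of realizable $(s_l,s_r)$ exactly right, including the pairs $(1,1)$ and $(2,2)$. The rest is routine bookkeeping of inequalities.
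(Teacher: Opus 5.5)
Your proof is correct and follows the same route as the paper. The paper's ``direct check on the angle variables'' is what you make explicit by showing the realizable pairs $(s_l,s_r)$ are exactly those with $0\le s_l,s_r\le 2$ and $s_l+s_r\ge 2$. You then take the union of the overlapping intervals $[a-2,\min(a,c+2)]$ over the interval of admissible $a$, and use a witness $a\in[0,2]$ for $0$, which the paper chooses explicitly as $a=\min\{d,2\}$.

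One correction to your aside: the lemma is not trivially true when $B_c=\emptyset$. For $c=-1$ and $3\in I$, the right-hand side equals $\{1\}$ while $R(I,B_c)=\emptyset$, so the hypothesis $c\ge 0$ (i.e.\ $c\in\N$) is genuinely needed.
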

\begin{proof}
The characterization is a direct check on the angle variables. Thus each $a\in I$ yields an interval contained in $[a-2,a]$, and consecutive values of $a$ yield overlapping intervals, so their union is an interval. If $R(B_d,B_c)\neq\emptyset$, some $a\le d$ satisfies $2-c\le a$; then $a=\min\{d,2\}$ satisfies the conditions above with $z=0$, so $0\in R(B_d,B_c)$.
\end{proof}

\begin{lemma}[Key Lemma]\label{lem:key}
If $A$ is finite and symmetric, then $\pieces(\sigma_P(A,B_c))\le\pieces(A)$.
\end{lemma}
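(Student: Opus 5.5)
The plan is to turn the statement into a bookkeeping argument over the maximal intervals of $A$, using Lemma~\ref{lem:interval} as the only place where the angle variables enter.

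\textbf{Step 1: reduce to one order of the children.} I would show that $R(B_c,A)=-R(A,B_c)$. Suppose $z=x-s_l=y+s_r$ with $x\in B_c$ and $y\in A$. Negating gives $-z=(-y)-s_r=(-x)+s_l$. Here $-y\in A$ by symmetry of $A$, and $-x\in B_c$. The constraints on the angle variables in Definition~\ref{def:sigma} are symmetric under exchanging the superscripts $l$ and $r$. Hence $-z\in R(A,B_c)$, and the reverse inclusion follows in the same way. Writing $R:=R(A,B_c)$, this gives $\sigma_P(A,B_c)=R\cup(-R)$.

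\textbf{Step 2: decompose along the pieces of $A$.} Let $I_1<\dots<I_p$ be the maximal intervals of $A$, so $p=\pieces(A)$. Symmetry of $A$ gives $-I_j=I_{p+1-j}$. By definition $R$ is monotone and distributes over unions in its first argument, so $R=\bigcup_j J_j$ with $J_j:=R(I_j,B_c)$.

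By Lemma~\ref{lem:interval}, each $J_j$ is an interval, possibly empty. Explicitly,
\[
J_j=\{z : \exists a\in I_j,\ a\ge 2-c,\ a-2\le z\le \min(a,c+2)\}.
\]
Symmetrically, $K_j:=-J_{p+1-j}$ is an interval, namely
\[
K_j=\{z : \exists a\in I_j,\ a\le c-2,\ \max(a,-c-2)\le z\le a+2\}.
\]
Then $\sigma_P(A,B_c)=\bigcup_{j=1}^p (J_j\cup K_j)$, with both $J_j$ and $K_j$ lying in $[\min I_j-2,\max I_j+2]$. It therefore suffices to prove the following claim: \emph{for each $j$, $J_j\cup K_j$ is an interval or empty.} Given the claim, $\sigma_P(A,B_c)$ is a union of at most $p$ intervals, so it has at most $p$ maximal intervals. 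Any merging between different $j$ only lowers the count.

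\textbf{Step 3: the claim, which is the main obstacle.} If one of $J_j,K_j$ is empty there is nothing to show, so assume both are nonempty. I would split on $c$.

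\emph{Case $c\ge 2$.} Every $a\in I_j$ with $2-c\le a\le c-2$ contributes both the point $a\in J_j$ and the point $a\in K_j$, so the two intervals intersect. If no $a\in I_j$ lies in $[2-c,c-2]$, then since both sets are nonempty, $I_j$ must straddle that window. This forces $I_j\supseteq[2-c,c-2]$, and then the window itself supplies a common point.

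\emph{Case $c\le 1$.} Here $J_j$ uses only $a\ge 2-c\ge 1$, and $K_j$ uses only $a\le c-2\le -1$. If both are nonempty, $I_j$ contains points of both signs, so $I_j\ni 0$ and $I_j$ is the middle piece $I_j=-I_j$. In this situation Lemma~\ref{lem:interval} gives $0\in J_j$, and by symmetry $0\in K_j$. Again the two intervals meet.

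The delicate part is making this case analysis exhaustive with the exact truncations $z\le c+2$ and $z\ge -c-2$, so that no gap of size one appears between $J_j$ and $K_j$. I would check this first on the middle piece with $c\in\{0,1\}$, where the gap is most likely to appear.
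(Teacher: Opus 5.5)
Your proof is correct and is essentially the paper's argument with the induction unrolled: the paper peels off the outermost pair $[l,m],[-m,-l]$ and shows that $H=R([l,m],B_c)\cup-R([-m,-l],B_c)$ (your $J_p\cup K_p$) is an interval, because both terms contain $l$ whenever the second is nonempty, and it handles the middle piece in the base case via $0\in R(B_d,B_c)$, exactly as in your Step~3. Your closing worry about a gap of size one is unnecessary, since your cases already give $J_j\cap K_j\neq\emptyset$; the ``straddle'' sub-case cannot occur, because for $c\ge2$ the interval $I_j$ must meet the nonempty window $[2-c,c-2]$.
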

\begin{proof}
The idea is to remove the two outermost maximal intervals of $A$, which are mirror images of each other, show that together they contribute at most two pieces to $\sigma_P(A,B_c)$, and handle the rest by induction on $\pieces(A)$. We use two simple observations. 
\begin{itemize} 
\item[(i)]If $A$ is symmetric, mirroring exchanges the two orders of the children: $R(B_c,A)=-R(A,B_c)$ and $\sigma_P(A,B_c)=R(A,B_c)\cup-R(A,B_c)$.
\item[(ii)] Every value of $R(X,B_c)$ comes from a single element of $X$, so $R(\cdot,B_c)$ distributes over union; moreover, $\pieces(X\cup Y)\le\pieces(X)+\pieces(Y)$.
\end{itemize} 

\emph{Base cases.} If $\pieces(A)=0$, then $A=\emptyset$ and $\sigma_P(A,B_c)=\emptyset$. If $\pieces(A)=1$, $A=[-m,m]$ with $m=\max A$. By Lemma~\ref{lem:interval}, $R(A,B_c)$ is empty or an interval containing $0$, and so is its mirror image; by (i), $\sigma_P(A,B_c)$ is their union, which has at most one piece.

\emph{Inductive step.} Let $\pieces(A)\ge2$ and assume the claim for every finite symmetric set with fewer pieces. Let  $[l,m]$ be the maximal interval contained in $A$ that contains $m=\max A$; by symmetry, $[-m,-l]$ is the maximal interval contained in $A$ that contains $\min A=-m$. These two intervals are disjoint, as otherwise $A=[-m,m]$ would have one piece. It immediately follows that $l\ge1$. Since $l-1\notin A$ by maximality and $-l+1\notin A$ by symmetry, the two outermost intervals are separated from the rest $A'=A\cap[-l+1,l-1]$, which is symmetric. Hence $A=A'\cup[l,m]\cup[-m,-l]$ and $\pieces(A)=\pieces(A')+2$. By (i) and (ii),
\[
\sigma_P(A,B_c)=\sigma_P(A',B_c)\cup H\cup(-H),\qquad H=R([l,m],B_c)\cup-R([-m,-l],B_c),
\]
where $H$ and $-H$ collect the contribution of the two outermost intervals. We claim that $H$ is an interval. Both terms are intervals by Lemma~\ref{lem:interval}. If the second term is nonempty, some $a\le-l$ satisfies $2-c\le a$ in the characterization of Lemma~\ref{lem:interval}, so $c\ge l+2$; then the characterization holds for $a=z=l$ and for $a=z=-l$, so both terms contain $l$. Hence $H$ and $-H$ contribute at most two pieces, and since $\pieces(A')=\pieces(A)-2$, the inductive hypothesis applied to $A'$ and (ii) give $\pieces(\sigma_P(A,B_c))\le\pieces(A')+2=\pieces(A)$.
\end{proof}

\begin{corollary}\label{cor:lightP2}
Let $\mu$ be a light P2-node whose child $\mu_r$ roots a light subtree. Then $\pieces_2(\Spir(\mu))\le\pieces_2(\Spir(\mu_l))$.
\end{corollary}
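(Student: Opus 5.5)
The plan is to reduce the corollary to the Key Lemma (Lemma~\ref{lem:key}) by halving all spiralities. The bridge is the identity $\pieces_2(S)=\pieces(S/2)$ for sets of even integers.

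First I dispose of a degenerate case. If $\Spir(\mu_r)=\emptyset$, then by Definition~\ref{def:sigma} no value $z$ can be realized in either order of the children. Hence $\Spir(\mu)=\emptyset$ and the inequality is trivial. Otherwise, Lemma~\ref{lem:light} applied to the light subtree of $\mu_r$ gives $\Spir(\mu_r)=\Band(c)$ for some $c\in\N$.

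Since $\mu$ is light, Lemma~\ref{lem:parity} says all elements of $\Spir(\mu_l)$ and $\Spir(\mu_r)$ are even. So I can set $A=\Spir(\mu_l)/2$, which gives $\Spir(\mu_r)/2=\Band(c)/2=B_c$. The set $A$ is finite, and it is symmetric because $\Spir(\mu_l)$ is symmetric.

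Next I verify that $\Spir(\mu)=2\,\sigma_P(A,B_c)$. All coefficients $k^d_w$ equal $1$, so the P2 rule of Definition~\ref{def:sigma} reads $z=x-2(\alpha^l_u+\alpha^l_v)=y+2(\alpha^r_u+\alpha^r_v)$ with $x\in 2A$ and $y\in 2B_c$. The side condition about coefficients $\tfrac12$ is vacuous here. Dividing by two gives exactly the defining equation of $R(A,B_c)$ for one order and $R(B_c,A)$ for the other, and the union over the two orders is $\sigma_P(A,B_c)$. This is precisely the remark made after the definition of $\sigma_P$, so this step is mostly bookkeeping. The one point needing care is that the definition of $R$ uses the same angle constraints $1\le\alpha^l_w+\alpha^r_w\le2$.

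Finally, I chain the equalities and apply the Key Lemma:
\[
\pieces_2(\Spir(\mu))=\pieces(\sigma_P(A,B_c))\le\pieces(A)=\pieces_2(\Spir(\mu_l)).
\]
The first equality uses that every element of $\Spir(\mu)=2\sigma_P(A,B_c)$ is even, so $\pieces_2$ of it equals $\pieces$ of its half, and the inequality is Lemma~\ref{lem:key}.

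I expect the main obstacle to be minor: correctly matching the doubled P2 rule with the halved definition of $R$, including the empty-set case. No genuinely new argument should be needed beyond that.
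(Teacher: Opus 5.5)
Your proposal is correct and follows essentially the same route as the paper: you dispose of the empty case, use Lemma~\ref{lem:light} and Lemma~\ref{lem:parity} to write $\Spir(\mu_r)=2B_c$ and $\Spir(\mu_l)=2A$ with $A$ finite and symmetric, identify $\Spir(\mu)=2\,\sigma_P(A,B_c)$, and conclude via $\pieces_2(2X)=\pieces(X)$ and Lemma~\ref{lem:key}. Your explicit check that the halved P2-rule with unit coefficients gives the definition of $R$ in both orders is a more detailed version of the step the paper states in one line.
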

\begin{proof}
The corollary rewrites the Key Lemma in terms of spirality sets. By Lemma~\ref{lem:light}, $\Spir(\mu_r)$ is either empty, and then so is $\Spir(\mu)$, or equal to $2B_c$. By Lemma~\ref{lem:parity}, $\Spir(\mu_l)=2A$ for a symmetric $A\subseteq\Z$. Hence $\Spir(\mu)=2\,\sigma_P(A,B_c)$. Doubling a set turns maximal intervals into pieces, i.e., $\pieces_2(2X)=\pieces(X)$, so the claim is exactly Lemma~\ref{lem:key}.
\end{proof}
\begin{figure}[tbp]
\centering
\begin{tikzpicture}[
  x=.6cm,y=1cm,
  axis/.style={draw=black!65,line width=.6pt},
  input run/.style={draw=blue!70!black,line width=2.7pt},
  output run/.style={draw=black,line width=2.7pt}
]
  \node[anchor=east,font=\normalsize] at (-9.5,1.2) {$A$};
  \node[anchor=east,font=\normalsize] at (-9.5,-1.2) {$\sigma_P(A,B_1)$};
  \foreach \y in {1.2,-1.2} {
    \draw[axis] (-8.25,\y)--(8.25,\y);
    \foreach \x in {-8,...,8} \draw[axis] (\x,\y-.08)--(\x,\y+.08);
  }
  \draw[input run] (-7,1.2)--(-5,1.2);
  \draw[input run] (-1,1.2)--(1,1.2);
  \draw[input run] (5,1.2)--(7,1.2);
  \foreach \x in {-7,-6,-5,-1,0,1,5,6,7} \fill[blue!70!black] (\x,1.2) circle (2.5pt);
  \draw[output run] (-1,-1.2)--(1,-1.2);
  \foreach \x in {-3,-1,0,1,3} \fill[black] (\x,-1.2) circle (2.5pt);
  \foreach \x in {-8,...,8} \node[below=2pt,font=\normalsize] at (\x,-1.28) {$\x$};
  \node[font=\normalsize] at (0,0) {$B_1=[-1,1]$};
  \node[anchor=west,font=\normalsize] at (8.6,1.2) {3 pieces};
  \node[anchor=west,font=\normalsize] at (8.6,-1.2) {3 pieces};
  \node[font=\normalsize] at (0,-2.05) {$\sigma_P(A,B_1)=\{-3,-1,0,1,3\}$};
\end{tikzpicture}
\caption{A symmetric set $A$ with three pieces (top) and $\sigma_P(A,B_1)=\{-3,-1,0,1,3\}$ (bottom), which also has three pieces, as guaranteed by the Key Lemma.}
\label{fig:keylemma}
\end{figure}

\paragraph{Other compositions.} For finite $X\subseteq\Z$ write $p_X=\pieces_2(X)$. Outside the case covered by Corollary~\ref{cor:lightP2}, we only need a crude bound on how much each rule can increase the number of pieces.

\begin{lemma}\label{lem:compose}
Let $A,B,C\subseteq\Z$ be finite. Then $\pieces_2(A\oplus B)\le p_Ap_B$. Moreover, for any coefficients, the P2-rule applied to $A,B$ yields at most $18(p_A+p_B)$ pieces, and the P3-rule applied to $A,B,C$ yields at most $6(p_A+p_B+p_C)$ pieces.
\end{lemma}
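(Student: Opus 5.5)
We argue directly from the piece decompositions of the inputs, using two facts. First, a translate of a run is a run. Second, a union of runs has at most as many pieces as there are runs in it, which is fact (ii) of the Key Lemma transcribed to $\pieces_2$. Write $A=\bigcup_{i\le p_A}P_i$ and $B=\bigcup_{j\le p_B}Q_j$ as unions of their pieces, where each $P_i$ and $Q_j$ is a run of step two.

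\emph{Sumset.} We have $A\oplus B=\bigcup_{i,j}P_i\oplus Q_j$. Each $P_i\oplus Q_j$ is again a run of step two: if $P_i=\{a,a+2,\dots,b\}$ and $Q_j=\{c,\dots,d\}$, then $P_i\oplus Q_j=\{a+c,a+c+2,\dots,b+d\}$. So $A\oplus B$ is a union of at most $p_Ap_B$ runs, and $\pieces_2(A\oplus B)\le p_Ap_B$.

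\emph{P2-rule.} Fix one of the two orders and one of the finitely many admissible assignments of the angle variables. The constraint $1\le\alpha^l_w+\alpha^r_w\le 2$ leaves three choices per pole, so there are at most $9$ assignments, and at most $18$ combinations over both orders. For a fixed combination the shifts $s_l=2k^l_u\alpha^l_u+2k^l_v\alpha^l_v$ and $s_r$ are constants. The set of $z$ produced is therefore $(A-s_l)\cap(B+s_r)$. This presentation does not depend on the coefficients, which is why the bound holds for any coefficients. The main obstacle is bounding the pieces of this intersection linearly rather than by a product.

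For the intersection, the key observation is that
\[
\pieces_2(X\cap Y)\le\pieces_2(X)+\pieces_2(Y).
\]
To see this, charge each piece of $X\cap Y$ to its minimum $x$. Then $x\in X\cap Y$ and $x-2\notin X\cap Y$, so $x-2\notin X$ or $x-2\notin Y$. Hence $x$ is the minimum of a piece of $X$ or of a piece of $Y$. Distinct pieces have distinct minima, which gives the injection.

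Translation preserves $\pieces_2$, so each combination contributes at most $p_A+p_B$ pieces. Taking the union over the $18$ combinations and using subadditivity under union gives at most $18(p_A+p_B)$.

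\emph{P3-rule.} For each of the $6$ orderings $(a,b,c)$ the resulting set is $(\Spir(a)-4)\cap\Spir(b)\cap(\Spir(c)+4)$. Applying the intersection bound twice gives at most $p_A+p_B+p_C$ pieces for each ordering. The union over the $6$ orderings then gives $6(p_A+p_B+p_C)$.

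Finally, we check that the parity classes cause no trouble. Pieces are runs of step two, so odd and even elements are handled uniformly by $\pieces_2$. The counting above never needed all elements to have the same parity.
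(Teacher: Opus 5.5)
Your proof is correct and follows the paper's argument essentially step for step. Both use translation invariance of $\pieces_2$ and its subadditivity under union and intersection, proved by the same charge-to-the-minimum argument. Both then write $A\oplus B$ as a union of $p_Ap_B$ run sums, the P2-rule as the union of at most $2\cdot 9$ sets $(A-s_l)\cap(B+s_r)$, and the P3-rule as the union of $6$ triple intersections.
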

\begin{proof}
We use two simple facts: (i) a shift $X+t$ has the same number of pieces as $X$; (ii) $\pieces_2(X\cup Y)\le p_X+p_Y$ and $\pieces_2(X\cap Y)\le p_X+p_Y$, since a piece of $X\cup Y$ or of $X\cap Y$ starts at some $x$ with $x-2$ missing from $X$ or from $Y$, so $x$ is the first element of a piece of $X$ or of $Y$.

\emph{S-rule.} $A\oplus B$ is the union of the sums of a piece of $A$ and a piece of $B$. The sum of $\{a,a+2,\dots,b\}$ and $\{c,c+2,\dots,d\}$ is $\{a+c,a+c+2,\dots,b+d\}$, so by (ii) $A\oplus B$ has at most $p_Ap_B$ pieces.

\emph{P2-rule.} Once the order of the two children and the values of the angle variables are fixed, the rule constrains the spirality of each child to be a fixed shift of the spirality of $\mu$; hence the admissible values form a set $(A-s_1)\cap(B+s_2)$, which by (i) and (ii) has at most $p_A+p_B$ pieces. There are $9$ admissible angle assignments and $2$ orders, so by (ii) the P2-rule yields at most $18(p_A+p_B)$ pieces.

\emph{P3-rule.} For each of the $6$ orderings of the children, the rule yields $(A-4)\cap B\cap(C+4)$, which by (i) and (ii) has at most $p_A+p_B+p_C$ pieces; by (ii), the union over the orderings has at most $6(p_A+p_B+p_C)$ pieces.
\end{proof}

\paragraph{Global bound.} We now combine the previous lemmas along the SPQ$^*$-tree. A light subtree contributes at most one piece, and the number of pieces can grow only at the following nodes, which we call \emph{costly}: heavy P-nodes, light P2-nodes whose two children both root non-light subtrees, and S-nodes with at least two children rooting non-light subtrees. Let $\costly(\mu)$ be the number of costly nodes in the subtree of $\mu$.

\begin{lemma}\label{lem:global}
For every node $\mu$, $\pieces_2(\Spir(\mu))\le 36^{\costly(\mu)}\le 36^{2\heavyP(\mu)}\le 36^{4k}$.
\end{lemma}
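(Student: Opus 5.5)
Three inequalities are chained here, and I would prove them separately. For the first, $\pieces_2(\Spir(\mu))\le 36^{\costly(\mu)}$, I use structural induction on the subtree of $\mu$ and prove a slightly stronger statement: if the subtree of $\mu$ is light then $\pieces_2(\Spir(\mu))\le 1$. This follows from Lemma~\ref{lem:light}, because $\Band(c)$ is a single run.

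\textbf{Inductive step for the first inequality.} Let $\mu$ be non-light, so $\costly(\mu)\ge\costly$ of each child. I split by node type.

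\emph{Q$^*$-nodes.} These are always light, so they fall under the base case.

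\emph{Non-costly S-nodes.} Exactly one child $\mu_1$ is non-light; the others are light with sets $\Band(c_i)$ or empty. If some child set is empty, $\Spir(\mu)$ is empty. Otherwise, by Lemma~\ref{lem:compose}, $\pieces_2(\Spir(\mu))\le p_{\mu_1}\cdot1\cdots1$, which is at most $36^{\costly(\mu_1)}\le36^{\costly(\mu)}$.

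\emph{Costly S-nodes.} For children $\mu_1,\dots,\mu_h$, iterating the product bound gives
\[
\pieces_2(\Spir(\mu))\le\prod_i 36^{\costly(\mu_i)}=36^{\sum_i\costly(\mu_i)}\le 36^{\costly(\mu)}.
\]
The last step holds since costly nodes of different children are disjoint.

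\emph{Light P2-nodes with exactly one non-light child.} Corollary~\ref{cor:lightP2} bounds the count by that child's, with no growth.

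\emph{Heavy P2-nodes, P3-nodes, and light P2-nodes with two non-light children.} These are costly. Lemma~\ref{lem:compose} gives at most $18(p_A+p_B)$ or $6(p_A+p_B+p_C)$ pieces. Let $s=\sum_i\costly(\mu_i)=\costly(\mu)-1$. Each $p\le 36^{\costly(\mu_i)}\le 36^{s}$, so the bound is at most $18\cdot2\cdot36^{s}=36^{s+1}$ in the P2 case, and $18\cdot36^{s}$ in the P3 case. Either way it is at most $36^{\costly(\mu)}$. A light P2-node with two light children is itself light, so it is covered by the base case.

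\textbf{Second inequality: $\costly(\mu)\le 2\heavyP(\mu)$.} This is the combinatorial step, and I expect it to be the main obstacle. Heavy P-nodes are costly, so it remains to count the costly S-nodes and light P2-nodes. Each of these is a node where at least two child subtrees contain a heavy P-node. I would compress the subtree into the tree whose leaves are the heavy P-nodes and whose internal vertices are the branching nodes. Every costly non-heavy node is a branching vertex of this tree, possibly of higher degree, and in a rooted tree with $h$ marked nodes the number of branching nodes is at most $h-1$. The subtlety is that a heavy P-node may sit internally rather than as a leaf, below another heavy node. To handle this I would argue by induction: $\costly(\mu)\le 2\heavyP(\mu)-1$ whenever $\heavyP(\mu)\ge1$.
\begin{itemize}
\item At a heavy node, the count rises by one against a rise of two on the right-hand side.
\item At a branching node with $j\ge2$ non-light children, the sum $\sum_i(2h_i-1)+1\le 2\sum_i h_i-1$.
\item Otherwise nothing is added.
\end{itemize}
This yields $\costly\le2\heavyP$, which is the required inequality.

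\textbf{Third inequality.} This is immediate from Lemma~\ref{lem:heavy}, since $\heavyP(\mu)\le\heavyP(T)\le2k$.
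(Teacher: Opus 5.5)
Your proposal is correct and follows essentially the same route as the paper. For the first inequality you use the same node-type induction via Lemma~\ref{lem:light}, Lemma~\ref{lem:compose} and Corollary~\ref{cor:lightP2}; in the costly P-node case you bound each child by $36^{s}$ rather than using the paper's $x_i\ge1$ product trick, which is an equivalent substitute. For the second and third inequalities you use the same induction $\costly(\mu)\le2\heavyP(\mu)-1$ on non-light subtrees, followed by Lemma~\ref{lem:heavy}. One step is worth making explicit: at a heavy node, use $c_i\le 2h_i$ for every child, light ones included, so that a heavy node whose children are all light is also covered, where the bound is the tight $1\le1$.
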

\begin{proof}
\emph{First inequality.} We proceed by induction on the subtree of $\mu$. If the subtree of $\mu$ is light, then $\costly(\mu)=0$ and, by Lemma~\ref{lem:light}, $\Spir(\mu)$ is empty or $\Band(c)=2B_c$, which has one piece. Otherwise, for each child $\mu_i$ of $\mu$ write $x_i=36^{\costly(\mu_i)}\ge1$; by induction, $\pieces_2(\Spir(\mu_i))\le x_i$.
\begin{itemize}
\item If $\mu$ is an S-node, applying Lemma~\ref{lem:compose} repeatedly gives $\pieces_2(\Spir(\mu))\le\prod_i x_i=36^{\sum_i\costly(\mu_i)}\le36^{\costly(\mu)}$.
\item If $\mu$ is a light P2-node with a child, say $\mu_r$, rooting a light subtree, then $\mu$ is not costly and $\costly(\mu_r)=0$; by Corollary~\ref{cor:lightP2}, $\pieces_2(\Spir(\mu))\le x_l=36^{\costly(\mu)}$.
\item Otherwise $\mu$ is a costly P-node, and Lemma~\ref{lem:compose} gives at most $18(x_1+x_2)\le36x_1x_2$ pieces for a P2-node and at most $6(x_1+x_2+x_3)\le36x_1x_2x_3$ for a P3-node, using $x_i\ge1$. In both cases the bound is $36^{\costly(\mu)}$.
\end{itemize}

\emph{Second inequality.} We show $\costly(\mu)\le2\heavyP(\mu)-1$ whenever the subtree of $\mu$ is not light; for light subtrees both counts are $0$. Write $c_i=\costly(\mu_i)$ and $h_i=\heavyP(\mu_i)$; by induction, $c_i\le2h_i-1$ for the children rooting non-light subtrees and $c_i=h_i=0$ for the others, so $c_i\le 2h_i$ for all children.
\begin{itemize}
\item At a heavy P-node, $\heavyP(\mu)=\sum_ih_i+1$, so $\costly(\mu)=\sum_ic_i+1\le2\sum_ih_i+1=2\heavyP(\mu)-1$.
\item At a light P2-node with a child rooting a light subtree, both counts equal those of the other child.
\item At a light P2-node with two non-light children, $\costly(\mu)=c_l+c_r+1\le(2h_l-1)+(2h_r-1)+1=2\heavyP(\mu)-1$.
\item At an S-node with $j\ge1$ non-light children, the node itself counts only if $j\ge2$, so, summing over the non-light children only, $\costly(\mu)\le\sum_i(2h_i-1)+1=2\heavyP(\mu)-j+1\le2\heavyP(\mu)-1$ when $j\ge2$, and $\costly(\mu)\le2\heavyP(\mu)-1$ directly when $j=1$.
\end{itemize}

\emph{Third inequality.} It follows from $\heavyP(\mu)\le2k$ (Lemma~\ref{lem:heavy}).
\end{proof}

\section{The Algorithm}\label{sec:algo}

\paragraph{Representation.} A finite $S\subseteq\Z$ is stored as the sorted list of its pieces, each piece $\{a,a+2,\dots,b\}$ being given by its endpoints $a$ and $b$; the list has length $\pieces_2(S)$. A shift of a set of this form, and the sum of two such sets, are again of this form, and the intersection of two such sets is of this form or empty; so every operation first computes the result as a union of such sets and then merges them into the list of pieces. The rules of Definition~\ref{def:sigma} are evaluated as in the proof of Lemma~\ref{lem:compose}: S as iterated sums, P2 as the union over $2$ orders and $9$ admissible angle assignments of intersections of shifted lists, and P3 as the union over $6$ orderings $(\mu_a,\mu_b,\mu_c)$ of $(\Spir(\mu_a)-4)\cap\Spir(\mu_b)\cap(\Spir(\mu_c)+4)$.

\begin{lemma}\label{lem:correct}
For every node $\mu$, the computed list is the list of pieces of $\Spir(\mu)$, and its length is at most $36^{2\heavyP(\mu)}$.
\end{lemma}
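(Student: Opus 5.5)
We prove both claims by induction on the subtree of $\mu$, processing nodes bottom-up as the algorithm does. The length bound needs no separate argument: once the computed list is the list of pieces of $\Spir(\mu)$, its length is $\pieces_2(\Spir(\mu))$. This is at most $36^{\costly(\mu)}\le 36^{2\heavyP(\mu)}$ by the first two inequalities of Lemma~\ref{lem:global}. So the real content is correctness of the computed list. We prove it as the invariant ``the list stored at $\mu$ is the sorted list of maximal runs of $\Spir(\mu)$'', assuming it holds at all children.

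\emph{Base case.} For a Q$^*$-node the algorithm stores the single piece $\{-2(\len(\mu)-1),\dots,2(\len(\mu)-1)\}$, which is $\Band(\len(\mu)-1)$ by Definition~\ref{def:sigma}.

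\emph{Inductive step.} We check that each rule, evaluated on the children's piece lists, produces exactly the set prescribed by Definition~\ref{def:sigma}, and that the final merge turns any finite union of runs into the list of maximal runs.

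For the S-rule, $A\oplus B$ is the union over pairs of pieces of their sums, and each sum is again a run, as computed in the proof of Lemma~\ref{lem:compose}. Iterating over the children gives $\Spir(\mu_1)\oplus\dots\oplus\Spir(\mu_h)$.

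For the P3-rule, the condition $z=x-4=y=t+4$ with $x,y,t$ in the respective sets is literally $z\in(\Spir(a)-4)\cap\Spir(b)\cap(\Spir(c)+4)$. Taking the union over the $6$ orderings matches ``for some ordering''.

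For the P2-rule, we fix the order of the children and an admissible assignment of the $\alpha$'s, meaning one satisfying $1\le\alpha^l_w+\alpha^r_w\le2$ and the forced-$1$ constraints at coefficients $\tfrac12$. Then $z$ is admissible iff $z\in(\Spir(\mu_l)-s_1)\cap(\Spir(\mu_r)+s_2)$, where $s_1=2k^l_u\alpha^l_u+2k^l_v\alpha^l_v$ and $s_2=2k^r_u\alpha^r_u+2k^r_v\alpha^r_v$. These shifts are integers, since doubling clears the halves. Since at most $3\cdot3$ assignments satisfy the sum constraint per pole, enumerating them together with the $2$ orders and taking the union reproduces the existential in Definition~\ref{def:sigma} exactly.

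Shifts and intersections of runs are runs or empty. The intersection of two runs of step two is a run when their endpoints have the same parity and is empty otherwise, and the algorithm computes it piecewise over pairs of pieces. The merge step then sorts the resulting runs and fuses two runs whenever they overlap or are separated by a gap of exactly $2$ with the same parity. Its output is a list of disjoint runs, no two of which can be joined, so they are exactly the maximal runs of the union.

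The main obstacle is the merge argument, together with the parity bookkeeping. Runs of different parity residues interleave and must never be fused, and maximality means ``$x-2\notin S$'', not ``$x-1\notin S$''. To keep this clean we first split the union by residue mod $2$, then merge each class separately as ordinary intervals after the map $x\mapsto (x-r)/2$. After that, each class's maximal intervals correspond bijectively to pieces, as in the remark $\pieces_2(S)=\pieces(S/2)$ for even sets, applied to each residue class after shifting.
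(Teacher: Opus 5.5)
Your proposal is correct and follows the paper's proof. Correctness comes from checking that each rule is evaluated exactly as prescribed: sums, shifts and intersections of runs, unions over the $2$ orders and $9$ angle assignments or the $6$ orderings, then a merge. The length bound is $\pieces_2(\Spir(\mu))\le 36^{\costly(\mu)}\le 36^{2\heavyP(\mu)}$ by Lemma~\ref{lem:global}; you only spell out details, such as the residue-class merge, that the paper leaves to ``the description above''.
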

\begin{proof}
Each operation is correct by the description above, and the length bound is Lemma~\ref{lem:global}.
\end{proof}

\paragraph{Running time for a fixed root.} Operations on integers have unit cost. Let $N=36^{4k}$; by Lemma~\ref{lem:global}, every set $\Spir(\mu)$ has at most $N$ pieces, so its list has at most $N$ elements. The rule of a P-node is a constant number of shifts, unions and intersections. The rule of an S-node $\mu$ with $d$ children $\mu_1,\dots,\mu_d$ computes $d-1$ sums; each partial sum $\Spir(\mu_1)\oplus\dots\oplus\Spir(\mu_j)$ has at most $\prod_{i\le j}36^{\costly(\mu_i)}\le36^{\costly(\mu)}\le N$ pieces, as in the S-node case of the proof of Lemma~\ref{lem:global}. All intermediate lists have length polynomial in $N$, and each operation, including the final merge, takes time polynomial in the lengths of its inputs. Hence a node with $d$ children costs $O(d)\cdot N^{O(1)}$.

\begin{lemma}\label{lem:time}
For a fixed root of $T$, all lists are computed in $2^{O(k)}\cdot n$ time.
\end{lemma}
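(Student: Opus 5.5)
We sum the per-node cost over the tree, using the cost bound just established in the paragraph on running time for a fixed root: a node with $d$ children costs $O(d)\cdot N^{O(1)}$, where $N=36^{4k}$. The first step is to make this bound precise. Every list involved has length at most $N$, because every $\Spir(\mu)$ and every partial sum of an S-node has at most $N$ pieces by Lemma~\ref{lem:global}. A single sum of two lists produces at most $N^2$ candidate runs. A P2-node evaluates $18$ intersections of shifted lists, and a P3-node evaluates $6$ triple intersections; each of these produces $O(N)$ runs, which are then merged. Sorting and merging $M$ runs takes $O(M\log M)$ time, which is polynomial in $N$. Hence each elementary operation takes $N^{O(1)}$ time. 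An S-node performs $d-1$ such operations, and a P-node or Q$^*$-node performs $O(1)$ of them. A Q$^*$-node needs only $O(1)$ work, since it outputs the single run $\Band(\len(\mu)-1)$.

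The second step is the summation. The cost is $\sum_\mu O(d_\mu+1)\cdot N^{O(1)}$, and $\sum_\mu (d_\mu+1)$ equals twice the number of nodes minus one, which is $O(n)$ because $T$ has $O(n)$ nodes. Building $T$ takes $O(n)$ time~\cite{GM00}. Finally, $N^{O(1)}=36^{O(k)}=2^{O(k)}$, so the total is $2^{O(k)}\cdot n$.

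The main obstacle is making sure that no intermediate list is larger than $N^{O(1)}$, particularly at S-nodes with many children. The concern is that a naive product bound on the partial sums could grow with $d$. This is exactly what the preceding paragraph rules out: each partial sum has at most $\prod_{i\le j}36^{\costly(\mu_i)}\le 36^{\costly(\mu)}\le N$ pieces. The raw union computed before the merge has at most $N\cdot N$ runs, so it is still polynomial in $N$.

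A second point to check is the unit-cost assumption. The spirality values are bounded by $O(n)$, since they are at most twice the total chain length, so they fit in a machine word and arithmetic on them is indeed $O(1)$.

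With both points in place, the bound follows for any fixed root.
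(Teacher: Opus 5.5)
Your proof is correct and follows the paper's argument. You bound each node's cost by $O(d)\cdot N^{O(1)}$ with $N=36^{4k}$, using the partial-sum bound $\prod_{i\le j}36^{\costly(\mu_i)}\le N$ at S-nodes, and then sum over $T$ using that the numbers of children add up to $O(n)$. The extra details you give (the $O(1)$ cost of Q$^*$-nodes, the identity $\sum_\mu(d_\mu+1)=2|T|-1$, and the $O(n)$ bound on spirality values behind unit-cost arithmetic) are consistent with the paper's one-line summation and just spell it out more explicitly.
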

\begin{proof}
Summing $O(d)\cdot N^{O(1)}$ over the nodes of $T$, whose numbers $d$ of children add up to $O(n)$, gives $O(n)\cdot N^{O(1)}=2^{O(k)}\cdot n$.
\end{proof}

\paragraph{Test at the root.} By Lemma~\ref{lem:root}, the test for the root chain $\rho$ computes $\Spir(\nu)\oplus\Band(\ell-1)$ and checks whether it contains $8$, in $2^{O(k)}$ further time.

\paragraph{Variable root and external face.} For a node $r$ of $T$, let $\Spir_r(\mu)$ denote the spirality set of $\mu$ when $T$ is rooted at $r$. Testing every root separately costs $2^{O(k)}\cdot n^2$. To avoid this we use the reusability principle of~\cite{DLOP20}, in the form of~\cite[Lemma~5.11]{DBLP:journals/corr/abs-2502-03309}, which we state for our setting.

\begin{lemma}\label{lem:reroot}
Let $r_1,\dots,r_h$ be nodes of $T$. Assume that, for every node $\mu$ with $d$ children in the rooting at $r_i$, the set $\Spir_{r_i}(\mu)$ is computed from the sets of its children with $O(d)$ operations if $i=1$, and with $O(1)$ operations if $i\ge2$. Then $\Spir_{r_1}(r_1),\dots,\Spir_{r_h}(r_h)$ are computed with $O(n)$ operations.
\end{lemma}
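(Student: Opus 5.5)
We prove Lemma~\ref{lem:reroot} with the standard reusability (rerooting) argument of~\cite{DLOP20}. For each root $r_i$, the tree rooted at $r_i$ induces an orientation of every tree edge $(\mu,\mu')$ of $T$. The set $\Spir_{r_i}(\mu)$ depends only on the subtree hanging below $\mu$ in that rooting, that is, on the \emph{directed} edge from $\mu$ towards its parent. So the quantities we need are indexed by pairs $(\mu,\text{parent direction})$; call such a pair a \emph{half-edge}. There are $O(n)$ of them, two per tree edge plus one "root" value per node. The plan is to compute and cache each half-edge value once, and then bound the total work by charging.

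\emph{Step 1 (first root).} Root $T$ at $r_1$ and compute $\Spir_{r_1}(\mu)$ bottom-up for all $\mu$. The hypothesis gives $O(d_\mu)$ operations per node. Since $\sum_\mu d_\mu=O(n)$, this costs $O(n)$ operations in total and fills the half-edge values oriented towards $r_1$.

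\emph{Step 2 (later roots).} For $i\ge2$, move from $r_{i-1}$ (or from $r_1$) to $r_i$ along the tree path. Only nodes on the path $r_1\to r_i$ change their parent. A node $\mu$ on this path now has as children all its old children except the next path node, plus its old parent. Every child value except that of the old parent is already cached from Step~1 or from earlier rerootings. The old parent's value in the new orientation is itself a half-edge value, computed recursively from the $r_1$ end of the path.

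\emph{Step 3 (charging).} Each newly oriented half-edge value is computed once and then stored. By hypothesis, a computation for $i\ge2$ costs $O(1)$ operations. There are $O(n)$ half-edges and $n$ possible root values, so the total over all $i$ is $O(n)+O(n)\cdot O(1)=O(n)$ operations.

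The main obstacle is justifying that the $O(1)$-per-node hypothesis for $i\ge2$ really applies when we reuse cached children. For example, an S-node with $d$ children cannot recompute a $d$-fold sum. It must instead combine an aggregate of its cached children with the single changed child. I would therefore read the hypothesis as applying in exactly this reuse setting, as in~\cite[Lemma~5.11]{DBLP:journals/corr/abs-2502-03309}, and check that the rerooting order matches it. To do this, process the roots $r_2,\dots,r_h$ in DFS order from $r_1$, so that each rerooting step changes one edge orientation and triggers exactly one $O(1)$ update.
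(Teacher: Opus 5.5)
Your proposal is correct, and it is essentially the standard rerooting argument (an $O(n)$-operation first traversal, then each of the $O(n)$ pairs of a node and a parent direction evaluated once in $O(1)$ operations, with the roots visited in DFS order from $r_1$); this is the reusability principle that the paper does not prove but imports from \cite{DLOP20} in the form of \cite[Lemma~5.11]{DBLP:journals/corr/abs-2502-03309}. Your reading of the $O(1)$ hypothesis as allowing data cached in the first traversal also matches how the paper applies the lemma, which stores prefix and suffix sums at every S-node; these are needed because $\oplus$ cannot be undone, so a single aggregate of all old children (as your wording suggests) would not let you drop the child that becomes the new parent.
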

We apply it taking the operations on lists as elementary operations; since each of them costs $2^{O(k)}$, as shown below, this gives $2^{O(k)}\cdot n$ time.

A P-node has at most four neighbours in $T$, since its poles have degree at most four; hence its set is recomputed with $O(1)$ operations, and its coefficients, which depend only on the degrees of its poles, in $O(1)$ time. For an S-node with children $\nu_1,\dots,\nu_d$ in the first rooting, the first traversal also stores the prefix sums $\Spir(\nu_1)\oplus\dots\oplus\Spir(\nu_j)$ and the suffix sums $\Spir(\nu_j)\oplus\dots\oplus\Spir(\nu_d)$, with $O(d)$ operations. When $\nu_p$ becomes the parent, the new set is the sum of the prefix up to $\nu_{p-1}$, the suffix from $\nu_{p+1}$, and the set on the side of the old parent: $O(1)$ operations. Each prefix or suffix sum is a partial sum of an S-node, so it has at most $N$ pieces as in the running-time analysis, and the other sets are spirality sets $\Spir_{r_i}(\mu)$, bounded by Lemma~\ref{lem:global} in the rooting at $r_i$. Hence each list has length at most $N$, and each operation costs $2^{O(k)}$.

\begin{proof}[Proof of Theorem~\ref{thm:main}]
Compute $T$ in $O(n)$ time~\cite{GM00}. By Lemmas~\ref{lem:time} and~\ref{lem:reroot}, the sets required by Lemma~\ref{lem:root} for all root chains are computed in $2^{O(k)}\cdot n$ time. Every external face contains a chain, so all choices of the external face are covered.
\end{proof}

\section{Formal Verification}\label{sec:lean}

The formalization is a single Lean~4.35.0 file with Mathlib v4.35.0, available at
\begin{center}\url{https://github.com/GiacomOrtali/rectilinear-sp-linearfpt-lean}.\end{center}
SPQ$^*$-trees are an inductive type (Q$^*$-nodes with their length, S-nodes with a list of children, P2-nodes with four integer coefficients $2k^d_w$, P3-nodes), and $\Spir$ is defined by the rules of Definition~\ref{def:sigma}. The file contains no \lean{sorry}, and \texttt{\#print axioms} reports only \lean{propext}, \lean{Classical.choice}, \lean{Quot.sound}.

\begin{table}[h]
\centering
\begin{tabular}{@{}ll@{}}
\toprule
Statement & Lean \\
\midrule
Lemma~\ref{lem:key} (Key Lemma) & \lean{KeyLemma.key_lemma} \\
Symmetry and finiteness of $\Spir$ & \lean{T1_1}, \lean{T2_0} \\
Lemma~\ref{lem:light} & \lean{T1_2}, \lean{T1_3} \\
Corollary~\ref{cor:lightP2} & \lean{T2_1} \\
Lemma~\ref{lem:compose} & \lean{T3_1}--\lean{T3_4} \\
Lemma~\ref{lem:global} & \lean{T4_0}--\lean{T4_3}, \lean{T5_0}--\lean{T5_2} \\
Lemma~\ref{lem:correct} & \lean{T6a_1}--\lean{T6a_3}, \lean{T6a'_1}--\lean{T6a'_4} \\
Lemma~\ref{lem:time} & \lean{T6b} \\
Root test & \lean{T6d_1}, \lean{T6d_2} \\
\bottomrule
\end{tabular}
\caption{Paper statements and formal theorems.}\label{tab:lean}
\end{table}

Not formalized: that the rules of Definition~\ref{def:sigma} give exactly the spiralities of rectilinear representations, and the equivalence of Lemma~\ref{lem:root} between rectilinear planarity and the test $8\in\Spir(\nu)\oplus\Band(\ell-1)$, which Lean only evaluates; both are taken from~\cite{DKLO22,DDLMO23}; Lemma~\ref{lem:parity}, which enters as the hypothesis \lean{WF}; Lemma~\ref{lem:heavy}, the formal bounds being stated in terms of $\heavyP$; the root chain of length $\ell$, which reduces to \lean{T6d_1} on the S-node whose children are $\nu$ and a Q$^*$-node of length $\ell$; the re-rooting step; the construction of $T$. The formal cost model counts unit-cost integer operations.

\section{Conclusions and Open Problems}\label{sec:concl}

We showed that rectilinear planarity of series-parallel 4-graphs, in the variable embedding setting, can be tested in $2^{O(k)}\cdot n$ time, where $k$ is the number of degree-4 vertices. In particular, if $k=o(\log n)$ the running time is $n^{1+o(1)}$, which is subquadratic and improves on the $O(n^2)$ bound of~\cite{DKLO22}. The key ingredient is the Key Lemma: light P2-nodes with a light child do not increase the number of pieces of spirality sets. This fact and the other combinatorial bounds are verified in Lean.

Our analysis actually gives a running time of $2^{O(h)}\cdot n$, where $h=\heavyP(T)\le2k$ is the number of heavy P-nodes. It is natural to ask whether the parameter can be restricted to the number of P3-nodes alone. This would follow if P2-nodes with degree-4 poles did not increase the number of pieces, as light P2-nodes do by the Key Lemma; we do not know whether this holds. Another direction is to improve the running time: the factor $36^{4k}$ comes from a crude bound on the P2- and P3-rules, and a finer analysis of how pieces merge may lead to a polynomial dependence on $k$.

Two further extensions seem within reach of known techniques, but remain to be worked out. The first concerns partial 2-trees, which need not be biconnected: as in~\cite{DKLO22}, one can combine the biconnected components along the block-cut tree, and the question is whether the bound on the number of pieces is preserved through this combination. The second is bend minimization: instead of a set of spiralities, each node stores the minimum number of bends for each spirality, as done in~\cite{DKLO22plane,DBLP:journals/corr/abs-2502-03309}, and one should check that the cost functions inherit a piecewise structure with $2^{O(k)}$ pieces.

\section*{Declaration of AI Use}
The author used ChatGPT (OpenAI) and Claude (Anthropic) as assistants in developing and checking proofs, in writing the Lean formalization, and in editing the text.

\newpage

\bibliographystyle{plainurl}
\bibliography{refs}

@article{DKLO22,
  author       = {Walter Didimo and Michael Kaufmann and Giuseppe Liotta and Giacomo Ortali},
  title        = {Rectilinear Planarity of Partial 2-Trees},
  journal      = {Journal of Graph Algorithms and Applications},
  volume       = {27},
  number       = {8},
  pages        = {679--719},
  year         = {2023},
  doi          = {10.7155/jgaa.00640},
  eprint       = {2208.12558},
  archivePrefix= {arXiv}
}

@inproceedings{DLOP20,
  author    = {Walter Didimo and Giuseppe Liotta and Giacomo Ortali and Maurizio Patrignani},
  title     = {Optimal Orthogonal Drawings of Planar 3-Graphs in Linear Time},
  booktitle = {Proceedings of the 2020 ACM-SIAM Symposium on Discrete Algorithms (SODA 2020)},
  pages     = {806--825},
  publisher = {SIAM},
  year      = {2020},
  doi       = {10.1137/1.9781611975994.49}
}

@misc{DBLP:journals/corr/abs-2502-03309,
  author       = {Walter Didimo and Giuseppe Liotta and Giacomo Ortali and Maurizio Patrignani},
  title        = {Optimal Orthogonal Drawings in Linear Time},
  year         = {2025},
  eprint       = {2502.03309},
  archivePrefix= {arXiv},
  note         = {Version 1}
}

@article{GarTam01,
  author  = {Ashim Garg and Roberto Tamassia},
  title   = {On the Computational Complexity of Upward and Rectilinear Planarity Testing},
  journal = {SIAM Journal on Computing},
  volume  = {31},
  number  = {2},
  pages   = {601--625},
  year    = {2001},
  doi     = {10.1137/S0097539794277123}
}

@inproceedings{GM00,
  author    = {Carsten Gutwenger and Petra Mutzel},
  title     = {A Linear Time Implementation of {SPQR}-Trees},
  booktitle = {Graph Drawing (GD 2000)},
  editor    = {Joe Marks},
  series    = {LNCS},
  volume    = {1984},
  pages     = {77--90},
  publisher = {Springer},
  year      = {2000},
  doi       = {10.1007/3-540-44541-2_8}
}

@article{DiBattistaLV98,
  author  = {Giuseppe {Di Battista} and Giuseppe Liotta and Francesco Vargiu},
  title   = {Spirality and Optimal Orthogonal Drawings},
  journal = {SIAM Journal on Computing},
  volume  = {27},
  number  = {6},
  pages   = {1764--1811},
  year    = {1998},
  doi     = {10.1137/S0097539794262847}
}

@misc{DDLMO23,
  author       = {Emilio {Di Giacomo} and Walter Didimo and Giuseppe Liotta and Fabrizio Montecchiani and Giacomo Ortali},
  title        = {On the Parameterized Complexity of Bend-Minimum Orthogonal Planarity},
  year         = {2023},
  eprint       = {2308.13665},
  archivePrefix= {arXiv},
  note         = {Version 3; GD 2023}
}

@misc{DKLO22plane,
  author       = {Walter Didimo and Michael Kaufmann and Giuseppe Liotta and Giacomo Ortali},
  title        = {Computing Bend-Minimum Orthogonal Drawings of Plane Series-Parallel Graphs in Linear Time},
  year         = {2022},
  eprint       = {2205.07500},
  archivePrefix= {arXiv}
}

@book{DETT99,
  author    = {Giuseppe {Di Battista} and Peter Eades and Roberto Tamassia and Ioannis G. Tollis},
  title     = {Graph Drawing: Algorithms for the Visualization of Graphs},
  publisher = {Prentice-Hall},
  year      = {1999}
}

\end{document}